\documentclass{sig-alternate}

\usepackage{amsthm,amssymb,amsmath}
\usepackage{graphics}
\usepackage{bbm}
\usepackage{tikz}
\usepackage[plainpages=false,pdfpagelabels,colorlinks=true,citecolor=blue,hypertexnames=false]{hyperref}
\usepackage{color}
\overfullrule=1mm

\newtheorem{theorem}{Theorem}
\newtheorem{notation}[theorem]{Convention}

\newtheorem{corollary}[theorem]{Corollary}
\newtheorem{lemma}[theorem]{Lemma}
\newtheorem{remark}[theorem]{Remark}

\newtheorem{defi}[theorem]{Definition}
\newtheorem{example}[theorem]{Example}
\newtheorem{fact}[theorem]{Fact}

\newcommand{\bE}{ {\mathbb E}}

\newcommand{\cB}{ {\mathcal B}}

\newcommand{\bF}{ {\mathbb F}}
\newcommand{\bN}{ {\mathbb N}}
\newcommand{\bZ}{ {\mathbb Z}}
\newcommand{\cP}{ {\mathcal P}}
\newcommand{\cM}{ {\mathcal M}}
\newcommand{\cN}{ {\mathcal N}}

\newcommand{\pa}{\partial}
\newcommand{\spanning}{\text{span}}

\overfullrule=1ex

\def\lc{\operatorname{lc}}

\begin{document}

\title{Hermite Reduction and Creative Telescoping \\ for Hyperexponential Functions \titlenote{S.C.\ was supported
by the National Science Foundation (NSF) grant CCF-1017217, A.B.\ and F.C.\ were supported in part by the MSR-INRIA Joint
Centre, Z.L.\ by two NSFC grants (91118001 and 60821002/F02), and G.X.\ by NSFC grant (11171231).}
}

\numberofauthors{1}

\author{\medskip
Alin Bostan$^1$, \, Shaoshi Chen$^{2}$, \, Fr\'ed\'eric Chyzak$^1$,  \, Ziming Li$^3$, \, Guoce Xin$^4$ \\
\smallskip
       \affaddr{$^1$INRIA, Palaiseau, 91120, (France)}\\
       \smallskip
       \affaddr{$^2$Department of Mathematics, NCSU, Raleigh, 27695-8025, (USA)}\\
       \smallskip
       \affaddr{$^3$KLMM,\, AMSS, \,Chinese Academy of Sciences, Beijing 100190, (China)}\\
       \smallskip
       \affaddr{$^4$Department of Mathematics, Capital Normal University, Beijing 100048, (China)}\\
       \smallskip
      \email{schen21@ncsu.edu, \, $\{$alin.bostan, frederic.chyzak$\}$@inria.fr}\\
      \email{zmli@mmrc.iss.ac.cn, \,  guoce.xin@gmail.com}
}

\maketitle
\begin{abstract}
We present a reduction algorithm that simultaneously extends
Hermite's reduction for rational functions and the
Hermite-like reduction for hyperexponential functions. It yields a
unique additive decomposition and allows to decide
hyperexponential integrability. Based on this reduction algorithm,
we design a new method to compute minimal telescopers for bivariate
hyperexponential functions. One of its main features is that it can
avoid the costly computation of certificates. Its implementation
outperforms Maple's function {\sf{DEtools[Zeilberger]}}. Moreover,
we derive an order bound on minimal telescopers, which is more
general and tighter than the known one.
\end{abstract}

\category{I.1.2}{Computing Methodologies}{Symbolic and Algebraic Manipulation}[Algebraic Algorithms]

\terms{Algorithms, Theory}

\keywords{{Hermite Reduction, Hyperexponential function, Telescoper}


\section{Introduction}\label{SECT:intro}

Given a univariate rational function~$r$, Hermite reduction
in~\cite{Hermite1872, Horowitz1971, BronsteinBook} finds rational
functions~$r_1$ and~$r_2$ s.t.\ (i)~$r=r_1{+}r_2$, (ii)~$r_1$ is
rational integrable, (iii)~$r_2$ is a proper fraction with a
squarefree denominator. The additive decomposition is unique, and
$r$ is rational integrable if and only if~$r_2=0$.

A univariate function is hyperexponential if its logarithmic
derivative is rational. Exponential, radical and rational functions
are hyperexponential. Rational Hermite reduction has been extended
to hyperexponential functions by Davenport in~\cite{Davenport1986}
and by~Geddes, Le and Li in~\cite{GeddesLeLi2004}. The former aims
at solving Risch's equation; the latter is a differential analogue
of the reduction algorithm for hypergeometric terms
in~\cite{Abramov2001}. For a given hyperexponential function~$H$,
the reduction algorithms in~\cite{Davenport1986,GeddesLeLi2004} compute two hyperexponential
functions~$H_1$ and~$H_2$ s.t.~(i)~$H=H_1+H_2$, (ii)~$H_1$ is
hyperexponential integrable, (iii)~$H_2$ is minimal in some sense.
However,~$H_2$ is not unique in general and it may be nonzero even
when~$H$ is hyperexponential integrable. In order to decide the
integrability of~$H$, one additionally needs to compute polynomial
solutions of a first-order linear differential equation.

The method of creative telescoping for hyperexponential functions is
developed by Almkvist and Zeilberger in~\cite{Almkvist1990}.
It is nowadays an important automatic tool for computing definite integrals. Recently, it has
also played an important role in the resolution of intriguing
problems in enumerative combinatorics~\cite{KKZ2009, KKZ2011}. For a
bivariate hyperexponential function~$H(x, y)$, the problem of
creative telescoping is to find a nonzero operator~$L(x, D_x) {\in}
\bF(x)\langle D_x \rangle$, the ring of linear differential
operators over the rational-function field~$\bF(x)$,~s.t.
\begin{equation}\label{EQ:generalct}
L(x, D_x)(H) = D_y(G)
\end{equation}
for some hyperexponential function~$G$, where~$D_x = \pa/\pa x$ and
$D_y = \pa/\pa y$. The operator~$L$ above is called a
\emph{telescoper} for~$H$, and~$G$ is the corresponding {\em
certificate}. An algorithm for solving~\eqref{EQ:generalct} is given
in~\cite{Almkvist1990}, and is based on differential Gosper's algorithm. An algorithm for rational-function
telescoping is given in~\cite{BCCL2010}, and is based on
Hermite reduction. The latter separates the computation for telescopers from
that for certificates, and has a lower complexity than the former for rational functions.

In the present paper, we develop a reduction algorithm which, given
a univariate hyperexponential function~$H$, constructs
two hyperexponential functions~$H_1$ and~$H_2$ s.t.~(i) $H=H_1 + H_2$,
(ii)~$H_1$ is hyperexponential integrable, and (iii)~$H_2$ is either
zero or not hyperexponential integrable. We show that~$H_2$ in the
above additive decomposition  is unique and can be obtained without
computing polynomial solutions of any differential equation. Our
algorithm is based on the Hermite-like reduction
in~\cite{GeddesLeLi2004}, a differential variant of the
polynomial reduction in~\cite{Abramov2001}
and on the idea for reducing simple radicals in~\cite[Proposition~7]{XinZhang2009}.
The main new ingredients are the uniqueness of~$H_2$ and an easy way to compute~$H_2$,
which are crucial for many applications.
These enable us to extend the
reduction-based rational telescoping algorithm in~\cite{BCCL2010} to
the hyperexponential case, and derive an order bound on the
telescopers. The bound is more general and tighter than that given
in~\cite{Apagodu2006}.

The rest of the paper is organized as follows.
We review the notion of hyperexponential functions
and Hermite-like reduction in Sections~\ref{SECT:hfs} and~\ref{SECT:hermite-like}, respectively.
A new reduction algorithm is developed for hyperexponential functions in Section~\ref{SECT:hermiteH}.
After introducing kernel reduction in Section~\ref{SECT:kernel},
we present a reduction-based telescoping algorithm for bivariate hyperexponential functions, and derive an upper
bound on the order of minimal telescopers in Section~\ref{SECT:ct}.
We briefly describe an implementation of the new telescoping algorithm,
and present some experimental results in Section~\ref{SECT:timings}, which validate its practical relevance.

As a matter of notation, we let~$\bE$ be a field of characteristic zero and~$\bE(y)$ be the
field of rational functions in~$y$ over~$\bE$. For a polynomial~$p \in \bE[y]$, we denote by~$\deg(p)$ and~$\lc(p)$
the degree and leading coefficient of~$p$, respectively.
Let~$D_y$ denote the usual derivation~$d/dy$ on~$\bE(y)$. Then~$(\bE(y), D_y)$ is a differential field.

\section{Hyperexponential functions}\label{SECT:hfs}
Hyperexponential functions share the common properties of
rational functions, simple radicals,
and exponential functions. Together with hypergeometric terms,
they are frequently viewed as a special and important class of \lq\lq closed-form\rq\rq~solutions of linear
differential and difference equations with polynomial coefficients.

\begin{defi}\label{DEF:hf}
Let~$\Phi$ be a differential field extension of~$\bE(y)$. A nonzero element~$H\in \Phi$
is said to be \emph{hyperexponential} over~$\bE(y)$ if its logarithmic derivative~${D_y(H)}/{H}  \in \bE(y)$.
\end{defi}
The product of hyperexponential functions
is also hyperexponential. Two hyperexponential functions~$H_1, H_2$
are said to be \emph{similar} if there exists~$r\in \bE(y)$ s.t.~$H_1 = rH_2$.
The sum of similar hyperexponential functions is still hyperexponential, provided
that it is nonzero.

For brevity, we use the notation~$\exp(\int  f dy)$ to indicate a
hyperexponential function whose logarithmic derivative is~$f$. For a rational function~$r\in \bE$, we have
\[r \exp\left(\int  f\, dy\right) = \exp\left(\int  \left(f + D_y(r)/r\right)\, dy\right).\]
A univariate hyperexponential function~$H$ is
said to be \emph{hyperexponential integrable} if it is the derivative of another hyperexponential
function. For brevity, we say \lq\lq integrable\rq\rq~instead of \lq\lq hyperexponential integrable\rq\rq~in the sequel.

Assume that~$H = r\exp\left( \int f dy \right)$ is integrable. Then~$H$ is equal to~$D_y(G)$ for some hyperexponential function~$G$. A
straightforward calculation shows that~$G$ is similar to~$D_y(G)$, and so is~$H$. Set~$G= s \exp\left( \int f \right)$
for some~$s \in \bE(y)$. Then~$H=D_y(G)$  if and only if
\begin{equation} \label{EQ:integrable}
r = D_y(s) + f \, s.
\end{equation}
Deciding the integrability of~$H$ amounts to finding a rational solution~$s$ s.t.~the above equation holds.

\section{Hermite-like reduction}\label{SECT:hermite-like}
Reduction algorithms have been developed
for computing additive decompositions of rational functions~\cite{Ostrogradsky1845, Hermite1872, Horowitz1971},
hypergeometric terms~\cite{Abramov1975, Abramov2001}, and hyperexponential
functions~\cite{Davenport1986, GeddesLeLi2004}.
Those algorithms can be viewed as generalizations
of Gosper's algorithm~\cite{Gosper1978} and its
differential analogue~\cite[\S 5]{Almkvist1990}.

For a hyperexponential function~$H$, a reduction algorithm
computes two hyperexponential functions~$H_1, H_2$ s.t.
\begin{equation} \label{EQ:hrl}
H = D_y(H_1) + H_2.
\end{equation}
It turns out that~$H, H_1$ and~$H_2$ are similar. So we may write~$H {=} r \exp\left(\int f  dy\right)$
and~$H_i{=}r_i \exp\left(\int f  dy\right)$, where~$r, r_i, f$ belong to~$\bE(y)$ and~$i=1,2$.
Then~\eqref{EQ:hrl} translates into
\[ r = D_y(r_1) + f \,r_1  + r_2. \]
A reduction algorithm for computing~\eqref{EQ:hrl} amounts to choosing rational functions~$r, f$ and~$r_1$ so that~$r_2$ satisfies properties
similar to those obtained in Hermite reduction for rational functions.
There are at least two approaches to this end. One is given in~\cite{Davenport1986}, and the other in~\cite{GeddesLeLi2004}.
We review the latter, because  the notion of differential-reduced
rational functions plays a key role in Lemma~\ref{LM:poly} in Section~\ref{SECT:hermiteH}.

Recall~\cite[\S 2]{GeddesLeLi2004} that a rational function~$r=a/b\in \bE(y)$ is said
to be \emph{differential-reduced} w.r.t.~$y$ if
\[\gcd\left(b, a-i\, D_y (b) \right)= 1 \quad \text{for all $i\in \bZ$.}\]
By Lemma~2 in~\cite{GeddesLeLi2004},~$r$ is differential-reduced if and only if none of its residues is an integer.
The {\em differential rational canonical  form} of a rational function~$f$ in~$\bE(y)$ is
a pair~$(K, S)$ in~$\bE(y) \times \bE(y)$ s.t. (i)~$K$ is differential-reduced;
(ii)~the denominator of~$S$ is coprime with that of~$K$; and (iii)~$f$ is equal to~$K + D_y(S)/S$.
Every rational function has a unique canonical form in the sense that~$K$ is unique and~$S$ is unique up to
a multiplicative constant in~$\bE$~\cite[\S 3]{GeddesLeLi2004}.
We call~$K$ and~$S$ the \emph{kernel} and~\emph{shell} of~$f$, respectively.
They can be constructed by the method described in~\cite[\S 3]{GeddesLeLi2004}.

Let~$H$ be a univariate hyperexponential function in the form~$\exp(\int  f dy)$ over~$\bE$.
Assume that~$K$ and~$S$ are the kernel and shell of~$f$, respectively.
Then~$H = S \exp\left(\int K \, dy \right).$
Note that~$K = 0$ if and only if~$H$ is a rational function, which is equal to~$cS$ for some~$c \in \bE$.
\begin{example}\label{Exam:example1}
Let~$H = \sqrt{y^2+1}/(y-1)^2.$
The logarithmic derivative of~$H$  is
\[\frac{D_y H}{H} = \frac{D_y (1/(y-1)^2)}{1/(y-1)^2} + \frac{y}{y^2+1},\]
where~$y/(y^2+1)$ is differential-reduced.  The kernel and shell of~$D_y(H)/H$ are $y/(y^2+1)$ and~$1/(y-1)^2$,
respectively. So~$H=\exp\left(\int y/(y^2+1)\, dy \right)/(y-1)^2.$
\end{example}

For brevity, we make a notational convention.
\begin{notation} \label{CON:kernel}
Let~$H$ denote a hyperexponential function whose logarithmic derivative has kernel~$K$ and shell~$S$.
Assume that~$K$ is nonzero, that is,~$H$ is not a rational function. Set~$T=\exp\left( \int K \, dy \right)$.
Moreover, write~$K=k_1/k_2$,
where~$k_1, k_2$ are polynomials in~$\bE[y]$ with~$\gcd(k_1, k_2)=1$.
\end{notation}

The algorithm~{\bf ReduceCert} in~\cite{GeddesLeLi2004} computes a rational function~$S_1$ s.t.
\begin{equation}\label{EQ:certred}
S = D_y(S_1) + S_1 K + \frac{a}{b k_2},
\end{equation}
where~$a, b\in \bE[y]$ satisfy the following conditions:~$b$ is the squarefree part
of the denominator of~$S$, and $\gcd(b, k_2){=}1$.
Note that~$a$ is not necessarily coprime with~$bk_2$.
As the algorithm~{\bf ReduceCert} only reduces the shell~$S$,
it is referred to as the {\em  shell reduction}.
It follows from~\eqref{EQ:certred} that
\begin{equation} \label{EQ:shell}
H = D_y\left( S_1 T \right) + \frac{a}{b k_2} T.
\end{equation}
By Theorem~4 in~\cite{GeddesLeLi2004},~$a/b$ belongs to~$\bE[y]$ if~$H$ is integrable.
\begin{example}\label{EXAM:shell}
Let~$H$ be the same hyperexponential function as in Example~\ref{Exam:example1}.
Then~$D_y(H)/ H$ has kernel~$K = y/(y^2 + 1)$ and shell~$S = 1/(y-1)^2.$
The shell reduction yields
\[ S = D_y(S_1)+ S_1 K +  \frac{y}{(y-1)k_2},\]
where~$S_1 = 1/(1-y)$ and~$k_2 = y^2+1$. Then~$H$ can be decomposed
into~$H= D_y(S_1 T) {+} y T/ ((y-1)k_2)$, where~$T {=} \sqrt{y^2{+}1}$.
By Theorem~4 in~\cite{GeddesLeLi2004},~$H$ is not integrable.
\end{example}

\smallskip \noindent
On the other hand, it is possible that~$a$ in~\eqref{EQ:shell} is nonzero but~$H$ is integrable.
\begin{example} \label{EXAM:integrable1}
Let~$H=y \exp(y)$ whose logarithmic derivative has kernel~$1$ and shell~$y$, that is,~$H=y \exp\left(\int 1 dy \right)$.
But~$H$ is integrable as it is equal to~$D_y\left( y \exp(y) - \exp(y) \right)$.
\end{example}

\smallskip \noindent
The shell reduction cannot be directly used to decide
hyperexponential integrability.
To amend this, the solution proposed in \cite[Algorithm {\bf ReduceHyperexp}]{GeddesLeLi2004} was to find the polynomial solutions
of an auxiliary first-order linear differential equation.
In the following section, we show how this can be avoided and improved.

\section{Hermite reduction for \\ hyperexponential functions} \label{SECT:hermiteH}
After the shell reduction described in~\eqref{EQ:shell}, the denominators of
shells have been reduced to squarefree polynomials.
In the rational case, i.e., when the kernel~$K$ is zero, the polynomial~$a$ in~\eqref{EQ:certred}
can be chosen s.t.~$\deg(a)<\deg(b)$, because
all polynomials are rational integrable. But a hyperexponential function with a polynomial shell
is not necessarily integrable. For example,~$H= \exp \left(y^2\right).$

We present a differential variant of~\cite[Theorem 7]{Abramov2001} to bound the degree of~$a$ in~\eqref{EQ:certred}.
The variant leads not only to a canonical additive decomposition of hyperexponential functions, but also a direct way to
decide their integrability.

\subsection{Polynomial reduction} \label{SUBSECT:poly}
With Convention~\ref{CON:kernel}, we define
\[ \cM_K = \{ k_2 D_y (p) + k_1 p \mid  p \in \bE[y] \}. \]
It is an $\bE$-linear subspace in~$\bE[y]$. We call~$\cM_K$ the {\em subspace for polynomial reduction w.r.t.~$K$.}
Moreover, define an $\bE$-linear map~$\phi_K$ from~$\bE[y]$ to~$\cM_K$ that, for every~$p \in \bE[y]$,
maps~$p$ to~$k_2 D_y(p) + k_1 p$. We call~$\phi_K$ the {\em map for polynomial reduction w.r.t.~$K$}.

Concerning the subspace~$\cM_K$ and the map~$\phi_K$, we have
\begin{lemma} \label{LM:poly}
(i)
If~$k_2 D_y(g) + k_1 g \in \bE[y]$ for some~$g  \in  \bE(y),$ then~$g \in \bE[y]$.
(ii)
The map~$\phi_K$ is bijective.
\end{lemma}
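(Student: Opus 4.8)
The plan is to treat the two parts together, since both ultimately rest on the single feature of~$K$ that makes it a kernel: being differential-reduced, it has no integer residue (Lemma~2 of~\cite{GeddesLeLi2004}), equivalently~$\gcd(k_2, k_1 - i\,D_y(k_2))=1$ for every~$i \in \bZ$. For~(ii) I would first observe that surjectivity of~$\phi_K$ onto~$\cM_K$ is immediate, since~$\cM_K$ is by definition the image of~$\phi_K$; so the whole content of~(ii) is injectivity, i.e.\ that~$\phi_K(p)=0$ forces~$p=0$. Thus the lemma reduces to two kernel-type statements controlled by residues at the poles of~$K$.

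For part~(i), write~$g=u/v$ with~$u,v\in\bE[y]$, $\gcd(u,v)=1$ and~$v$ monic, and argue by contradiction assuming~$\deg(v)\ge 1$. Clearing denominators,
\[
k_2 D_y(g)+k_1 g=\frac{k_2\bigl(v\,D_y(u)-u\,D_y(v)\bigr)+k_1 u v}{v^2},
\]
so the hypothesis that this lies in~$\bE[y]$ means that~$v^2$ divides the numerator~$N$. I would then fix an irreducible factor~$q$ of~$v$ of multiplicity~$m\ge1$ and compare the orders~$\operatorname{ord}_q$ of the three summands of~$N$, using that~$\operatorname{ord}_q(D_y(v))=m-1$ in characteristic zero while~$\operatorname{ord}_q(u)=0$. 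A short case split on~$\operatorname{ord}_q(k_2)$ (which is~$0$ or positive, and is never shared with~$k_1$ since~$\gcd(k_1,k_2)=1$) shows that the minimal order among the three summands is attained by a \emph{single} summand—forcing~$\operatorname{ord}_q(N)\le m<2m=\operatorname{ord}_q(v^2)$, the desired contradiction—in all cases except the single delicate subcase~$\operatorname{ord}_q(k_2)=1$, where two summands share the minimal order~$m$ and could a priori cancel.

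The hard part is exactly this cancellation subcase, and it is where differential-reducedness enters. Writing~$v=q^m v_1$ and~$k_2=q\kappa$ with~$q\nmid v_1,\kappa$, I would compute the residue class modulo~$q$ of~$N/q^m$ and find it to be~$u\,v_1\,(k_1-m\,D_y(k_2))$ up to a unit, using that~$\kappa\,D_y(q)\equiv D_y(k_2)\pmod q$. Since~$q\mid k_2$, the differential-reduced hypothesis with~$i=m$ gives~$q\nmid(k_1-m\,D_y(k_2))$, and as~$q\nmid u$ and~$q\nmid v_1$ the whole product is nonzero modulo~$q$; thus~$\operatorname{ord}_q(N)=m<2m$, completing the contradiction and proving~$g\in\bE[y]$.

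For the injectivity in~(ii), suppose~$p\in\bE[y]\setminus\{0\}$ satisfies~$k_2 D_y(p)+k_1 p=0$, i.e.\ $D_y(p)/p=-K$. The logarithmic derivative of a nonzero polynomial has only nonnegative-integer residues, namely the multiplicities of its roots, so every residue of~$K$ would be an integer; this contradicts differential-reducedness unless~$p$ is a nonzero constant, in which case~$D_y(p)/p=0$ forces~$K=0$, against Convention~\ref{CON:kernel}. Hence~$\ker\phi_K=\{0\}$ and~$\phi_K$ is bijective. I expect the valuation bookkeeping of the cancellation subcase in~(i) to be the only genuine obstacle; everything else is forced by the no-integer-residue property of the kernel.
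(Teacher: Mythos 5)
Your proof is correct. For part~(ii) you argue exactly as the paper does: surjectivity is definitional, and a nonzero~$p$ with~$\phi_K(p)=0$ forces~$K=-D_y(p)/p$, whose residues are negative integers, contradicting differential-reducedness; your separate handling of constant~$p$ only makes explicit what the paper compresses into ``which is nonzero since~$K\neq0$''. For part~(i), however, your route is genuinely different. The paper passes to the algebraic closure of~$\bE$, Laurent-expands~$g$ at a pole of order~$m$, and compares pole orders in~$D_y(g)+Kg$: for the pole order to drop to that of~$K$, the function~$K$ must have a simple pole there with residue exactly~$m$, which the residue characterization of differential-reducedness (Lemma~2 of~\cite{GeddesLeLi2004}) forbids. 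You instead stay inside~$\bE[y]$: you clear denominators and run a valuation count at each irreducible factor~$q$ of the denominator, resolving the one delicate case~$\operatorname{ord}_q(k_2)=1$ by the congruence~$N/q^m\equiv u\,v_1\,\bigl(k_1-m\,D_y(k_2)\bigr)\pmod{q}$ together with the defining condition~$\gcd\bigl(k_2,\,k_1-m\,D_y(k_2)\bigr)=1$ taken at~$i=m$; I checked this computation and it is right. The two arguments are conceptually parallel---your delicate subcase is exactly the paper's ``simple pole of~$K$ with residue~$m$''---but yours uses only the gcd definition of differential-reducedness and never needs the algebraic closure or the cited residue lemma for this part, so it is more self-contained; the price is the three-way case split on~$\operatorname{ord}_q(k_2)$ and the explicit mod-$q$ computation, which the paper's residue argument dispatches in one stroke. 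A small bonus of your version: the valuation count shows the cleared numerator is nonzero, so the degenerate possibility~$k_2D_y(g)+k_1g=0$ needs no separate attention, whereas it is silently absorbed by the paper's pole-order comparison.
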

\begin{proof}
Assume that~$g$ has a pole. Without loss of generality, we assume that the pole is $y=0$ and has order~$m$,
because the following argument is also applicable over the algebraic closure of~$\bE$.
Expanding~$g$ around the origin yields
\[ g = \frac{r}{y^m} + \mbox{terms of higher orders in~$y$}, \]
where~$r\in \bE \setminus \{0\}$. It follows from~$k_2 D_y(g) + k_1 g \in \bE[y]$ that $y = 0$ is a pole of
\[ \left( - \frac{m r}{y^{m+1}} + \mbox{higher terms} \right) + K \left(\frac{r}{y^m} + \mbox{higher terms }\right) \]
with order no more than that of~$K$. This implies that~$y{=}0$ is a simple pole of~$K$ with residue~$m$, which is incompatible
with~$K$ being differential-reduced. The first assertion holds.

The map~$\phi_K$ is surjective by its definition.
If~$\phi_K(p) {=} 0$ for some nonzero polynomial~$p {\in} \bE[y]$,
then~$K$ equals~$-D_y(p)/p$, which is nonzero since~$K \neq 0$. So~$K$ is not differential-reduced,
a contradiction. The second assertion holds.
\end{proof}

An $\bE$-basis of~$\cM_K$ is called an {\em echelon basis} if distinct elements in the basis have distinct degrees.
Echelon bases always exist and their degrees form a unique subset of~$\bN$.  Let~$\cB$ be an echelon basis of~$\cM_K$. Define
\[ \cN_K = \spanning_\bE \left\{ x^\ell \mid \mbox{$\ell \in \bN$ and $\ell \neq \deg(f)$ for all~$f \in \cB$} \right\}. \]
Then~$\bE[y] = \cM_K \oplus \cN_K$. We call~$\cN_k$ the {\em standard complement of~$\cM_K$}.
Using an echelon basis of~$\cM_K$, one can reduce a polynomial~$p$ to
a unique polynomial~$\tilde p \in \cN_K$ s.t.~$p-\tilde p \in \cM_K$.

In order to find an echelon basis of~$\cM_K$, we set~$d_1 { = } \deg k_1$, $d_2  {=} \deg k_2$,
$\tau_K { = } - \lc(k_1)/\lc(k_2)$, and $\cB { = } \left\{ \phi_K\left(y^n \right) | n \in \bN \right\}$.
By Lemma~\ref{LM:poly} (ii),~$\cB$ is an $\bE$-basis of~$\cM_K$. Let~$p$ be a nonzero polynomial in~$\bE[y]$. We make the following case distinction.

\smallskip \noindent
{\em Case 1.} $d_1 \ge d_2$. Then
$$ \phi_K(p) = \lc(k_1) \lc(p) y^{d_1 + \deg p} + \mbox{lower terms.}$$
So~$\cB$ is  an echelon basis,
in which~$\deg \phi_K\left(y^n \right) {=} d_1 {+} n$ for all~$n \in \bN$.
Accordingly,~$\cN_K$ is spanned by~$1, y, \ldots, y^{d_1 - 1}$.

\smallskip \noindent
{\em Case 2.} $d_1 = d_2 -1$ and~$\tau_K$ is not a positive integer. Then
\begin{equation} \label{EQ:case2}
\phi_K(p) {=} \left( \deg (p) \lc(k_2) {+} \lc(k_1) \right) \lc(p) y^{d_1+\deg p} {+} \mbox{lower terms}.
\end{equation}
Since~$\tau_K$ is not a positive integer, $\deg \phi_K\left(y^n \right) =  d_1 + n$.
Thus,~$\cM_K$ and~$\cN_K$ have the same bases as in Case~1.

\smallskip \noindent
{\em Case 3.} $d_1 < d_2 - 1$. If~$\deg(p)>0$, then
$$ \phi_K(p) = \deg (p) \lc(k_2) \lc(p) y^{d_2 + \deg (p) - 1} + \mbox{lower terms}.$$
Otherwise,~$\deg p = 0$ and~$\phi_K(p) = k_1 p$. Therefore,~$\cB$ is again an echelon basis,
in which
$$\deg \phi_K(1)=d_1 \,\, \text{and} \,\, \deg \phi_K(y^n)=d_2+n-1~\mbox{for all~$n \ge 1$}.$$
Accordingly,~$\cN_K$ has a basis~$1, \ldots, y^{d_1-1}, y^{d_1+1}, \ldots, y^{d_2-1}$.

\smallskip \noindent
{\em Case 4.} $d_1=d_2-1$ and~$\tau_K$ is a positive integer.
It follows from~\eqref{EQ:case2} that~$\deg \phi\left(y^n\right)=d_1+n$ if~$n \neq \tau_K$.
Furthermore, for every polynomial~$p$ of degree~$\tau_K$, $\deg(\phi_K(p))$ is of degree less than~$d_1+\tau_K$.
So any echelon basis of~$\cM_K$ does not contain a polynomial of degree~$d_1 + \tau_K$.
Set
$$\cB^\prime =\left\{ \phi\left(y^n \right) | n \in \bN, n \neq \tau_K \right\}.$$
Reducing~$\phi\left(y^{\tau_K}\right)$ by the polynomials in~$\cB^\prime$, we obtain a polynomial~$r$ of degree less
than~$d_1$. Note that~$r$ is nonzero, because~$\cB$ is an $\bE$-linearly independent set.
Hence,~$\cB^\prime \cup \{r\}$ is an echelon basis of~$\cM_K$. Consequently,~$\cN_K$ has an
$\bE$-basis~$\left\{ 1, y, \ldots, y^{\deg(r)-1}, y^{\deg(r)+1}, \ldots, y^{d_1-1}, \,  y^{d_1+\tau_K} \right\}$.
\begin{example} \label{EX:wild}
Let~$K=-6y^3/(y^4+1)$, which is differential-reduced. Then~$\tau_K=6$.
According to Case~4,~$\cM_K$ has an echelon basis
$\left\{y\} {\cup} \{ (n-6)y^{n+3} {+} ny^{n-1} | n {\in} \bN, n {\neq} 6\right\}.$
Moreover,~$\cN_K$ has a basis~$\{1, y^2, y^9\}$.
\end{example}

One can reduce the degree and number of terms of a polynomial using a subspace of polynomial reduction.
\begin{lemma} \label{LM:degree}
With Convention~\ref{CON:kernel}, we further let~$d_1 = \deg k_1$, $d_2 = \deg k_2$, and~$\tau_K=-\lc(k_1)/\lc(k_2)$.
Let~$\cM_K$ be the subspace for polynomial reduction, and~$\cN_K$ its standard complement w.r.t.~$K$.
Finally, let~$p$ be a polynomial in~$\bE[y]$.
\begin{itemize}
\item[(i)] If~$d_1 \ge d_2$ or $d_1 = d_2 - 1$ and~$\tau_K \notin \bZ^+$,
then there exists~$q \in \cN_K$ s.t.~$p \equiv q\mod\cM_K$ and~$\deg q < d_1$.
\item[(ii)] If~$d_1 < d_2 - 1$, then there exists~$q \in \cN_K$ s.t.~$p \equiv q$ $\mod$ $\cM_K$, $\deg q < d_2$
and the coefficient of~$y^{d_1}$ in~$q$ is equal to zero.
\item[(iii)] If~$d_1 {=} d_2-1$ and~$\tau_K {\in} \bZ^+$, then
there exists~$r \in \bE[y]$ of degree less than~$d_1$ s.t.~
\[ p \equiv  s y^{d_1 + \tau_K} + r \mod \cM_K \quad \mbox{for some~$s \in \bE$.} \]
Moreover,~$s y^{d_1 + \tau_K} + r$ belongs to~$\cN_K$, and~$r$ has at most $d_1-1$ terms.
\end{itemize}
\end{lemma}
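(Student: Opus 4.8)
The plan is to deduce all three parts directly from the explicit echelon bases and standard complements already computed in the case analysis (Cases~1--4) preceding the statement. The single structural input is the direct-sum decomposition $\bE[y] = \cM_K \oplus \cN_K$: given any $p \in \bE[y]$, there is a unique $q \in \cN_K$ with $p - q \in \cM_K$, namely the $\cN_K$-component of $p$, and this $q$ is exactly the reduced polynomial $\tilde p$ obtained by reducing $p$ against an echelon basis of $\cM_K$. Thus in every case it suffices to take $q$ to be this reduced representative and then read off its support from the known monomial basis of $\cN_K$.

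For part~(i), the hypotheses $d_1 \ge d_2$, and $d_1 = d_2 - 1$ with $\tau_K \notin \bZ^+$, are precisely Cases~1 and~2, in both of which $\cN_K = \spanning_\bE\{1, y, \ldots, y^{d_1-1}\}$. Hence the reduced $q \in \cN_K$ automatically satisfies $\deg q \le d_1 - 1 < d_1$, which is the claim. For part~(ii), the hypothesis $d_1 < d_2 - 1$ is Case~3, where $\cN_K$ has basis $1, \ldots, y^{d_1-1}, y^{d_1+1}, \ldots, y^{d_2-1}$. The top basis monomial is $y^{d_2-1}$, so $\deg q < d_2$; and the monomial $y^{d_1}$ is absent from the basis, so the coefficient of $y^{d_1}$ in $q$ vanishes. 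Both parts are therefore immediate bookkeeping once the correct case is identified.

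Part~(iii) requires genuine care and I expect it to be the main (if modest) obstacle. The hypothesis places us in Case~4, where $\cN_K$ has basis $\{1, \ldots, y^{\delta-1}, y^{\delta+1}, \ldots, y^{d_1-1}, y^{d_1+\tau_K}\}$; here $\delta = \deg(r_0)$, and $r_0$ is the nonzero polynomial of degree less than $d_1$ produced by reducing $\phi_K(y^{\tau_K})$ against $\cB'$. Writing the reduced representative $q \in \cN_K$ as $q = s\, y^{d_1+\tau_K} + r$, where $s \in \bE$ is the coefficient of $y^{d_1+\tau_K}$ and $r$ collects the remaining terms, I would argue as follows: every basis monomial other than $y^{d_1+\tau_K}$ has degree at most $d_1 - 1$, so $\deg r < d_1$ and $s\,y^{d_1+\tau_K} + r \in \cN_K$; moreover the support of $r$ lies in $\{0, 1, \ldots, d_1-1\} \setminus \{\delta\}$, a set of exactly $d_1 - 1$ elements, whence $r$ has at most $d_1 - 1$ terms. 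The only point needing attention is the term count rather than the degree bound: it relies on the fact that Case~4 removes exactly one monomial $y^{\delta}$ below degree $d_1$ from the standard complement, which is guaranteed by $r_0 \ne 0$ together with $\deg r_0 < d_1$.
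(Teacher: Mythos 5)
Your proposal is correct and follows exactly the paper's own route: the paper's proof is the one-line observation that the lemma is immediate from the $\bE$-bases of~$\cN_K$ constructed in the case distinction (Cases~1--4), and your argument simply spells out that bookkeeping, including the careful term count in Case~4 via the removed monomial~$y^{\delta}$ with $0 \le \delta \le d_1-1$.
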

\begin{proof}
The lemma is immediate from the $\bE$-bases of~$\cN_K$ constructed in the above case distinction.
\end{proof}
The next corollary enables us to derive an order bound on telescopers for hyperexponential functions.
\begin{corollary} \label{COR:term}
With the notation introduced in Lemma~\ref{LM:degree}, there exists $\cP \subset \left\{ y^n | n \in \bN \right\}$
with $|\cP| \le \max(d_1, d_2-1)$
s.t.\ every polynomial in~$\bE[y]$ can be reduced modulo~$\cM_K$ to an $\bE$-linear combination of the elements in~$\cP$.
\end{corollary}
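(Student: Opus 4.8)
The plan is to take $\cP$ to be precisely the set of monomials that constitute the $\bE$-basis of the standard complement $\cN_K$, and then to verify case by case that $|\cP| \le \max(d_1, d_2-1)$. Since $\bE[y] = \cM_K \oplus \cN_K$, every polynomial reduces modulo $\cM_K$ to a unique element of $\cN_K$, which by the very definition of the standard complement is an $\bE$-linear combination of monomials; hence the reduction automatically lands in $\spanning_\bE \cP$, and the only genuine content is the cardinality bound together with the fact that $\cN_K$ is finite-dimensional.

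First I would invoke the case distinction preceding Lemma~\ref{LM:degree}, which exhibits in each case an explicit finite monomial basis of $\cN_K$; in particular this already establishes that $\cN_K$ is finite-dimensional, so that $\cP$ is a finite set of monomials. In Case~1 ($d_1 \ge d_2$) and Case~2 ($d_1 = d_2 - 1$ with $\tau_K \notin \bZ^+$), this basis is $\{1, y, \ldots, y^{d_1-1}\}$, giving $|\cP| = d_1$; since $d_1 \ge d_2 - 1$ in both, we have $d_1 = \max(d_1, d_2-1)$, and these two cases are settled immediately.

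Next I would handle the two cases in which the monomial set has a gap. In Case~3 ($d_1 < d_2 - 1$) the basis of $\cN_K$ is $\{1, \ldots, y^{d_1-1}\} \cup \{y^{d_1+1}, \ldots, y^{d_2-1}\}$, whose cardinality is $d_1 + (d_2 - d_1 - 1) = d_2 - 1 = \max(d_1, d_2-1)$. In Case~4 ($d_1 = d_2 - 1$, $\tau_K \in \bZ^+$) the basis is $\{1, \ldots, y^{\deg(r)-1}, y^{\deg(r)+1}, \ldots, y^{d_1-1}, y^{d_1+\tau_K}\}$, and a direct count gives $\deg(r) + (d_1 - \deg(r) - 1) + 1 = d_1 = \max(d_1, d_2-1)$, using $d_1 = d_2 - 1$. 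Taking $\cP$ to be this basis in the relevant case yields the stated bound.

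The arguments here amount to purely combinatorial bookkeeping, so no serious obstacle arises once the bases from the preceding case analysis are taken as given. The only point requiring slight care is Case~4, where the single high-degree monomial $y^{d_1+\tau_K}$ replaces the removed $y^{\deg(r)}$: I would emphasize that the gap created at degree $\deg(r)$ and the extra tail monomial exactly compensate, so the total count remains $d_1$ rather than exceeding it. Everything else follows by summing the lengths of the listed monomial ranges.
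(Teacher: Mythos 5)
Your proposal is correct and follows exactly the paper's own argument: the paper's proof simply cites the case distinction preceding Lemma~\ref{LM:degree} to conclude that $\dim_\bE \cN_K \le \max(d_1, d_2-1)$, and your case-by-case counting of the monomial bases of $\cN_K$ is precisely the bookkeeping that one-line proof leaves implicit. No gap; you have just written out the details.
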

\begin{proof}
By the above case distinction, the dimension of~$\cN_K$ over~$\bE$ is at most~$\max(d_1, d_2-1)$.
The corollary follows.
\end{proof}

\subsection{Hyperexponential integrability}
With Convention~\ref{CON:kernel}, we further assume that the polynomials~$a$ and~$b$
are obtained by the shell reduction in~\eqref{EQ:shell}.
So the decomposition~\eqref{EQ:shell} holds for the present notation.
Moreover, let~$\cM_K$ be the subspace of polynomial reduction w.r.t.~$K$, and~$\cN_K$ its standard complement.

We are going to determine necessary and sufficient conditions on hyperexponential integrability.
Since~$\gcd(b, k_2){=}1$,
\begin{equation} \label{EQ:pfd}
\frac{a}{bk_2} = p+ \frac{q}{b} + \frac{ r}{k_2},
\end{equation}
where~$p, q, r \in \bE[y]$, $\deg(q) < \deg(b)$, and~$\deg(r) < \deg(k_2)$.
Using an echelon basis of~$\cM_K$, we compute~$u$ in~$\cM_K$ and~$v$ in~$\cN_K$ s.t.~$k_2 p + r = u + v.$
By the definition of~$\cM_K$, there exists~$w$ in~$\bE[y]$ s.t.~$u=k_2D_y(w)+k_1w$.
By~\eqref{EQ:pfd}, we get
\[ \frac{a}{bk_2} =  \frac{q}{b}  + \frac{k_2D_y(w)+k_1w + v}{k_2}
                  =   D_y(w) + K w + \frac{q}{b} + \frac{v}{k_2}. \]
It follows that
\begin{equation} \label{EQ:poly}
 \frac{a}{b k_2} T = D_y\left( w T \right) +
\left(\frac{q}{b} + \frac{v}{k_2}\right) T.
\end{equation}
The process for obtaining~\eqref{EQ:poly} is referred to  as the {\em polynomial reduction
for~$(a/(bk_2)) T$ w.r.t.~$K$}, as it makes essential use of the subspaces~$\cM_K$ and~$\cN_K$.
By~\eqref{EQ:poly} and~\eqref{EQ:shell},
\begin{equation} \label{EQ:shell1}
 H = D_y((S_1 + w) T) + \left(\frac{q}{b} + \frac{v}{k_2}\right) T,
 \end{equation}
 which motivates us to introduce the notion of residual forms.
\begin{defi} \label{DEF:residual}
With Convention~\ref{CON:kernel}, we further let~$f$ be a rational function in~$\bE(y)$.
Another rational function $r {\in} \bE(y)$ is said to be a {\em residual form} of~$f$ w.r.t.~$K$
if~there exist~$g$ in~$\bE(y)$ and~$q, b, p$ in~$\bE[y]$ s.t.
\[   f = D_y(g) + K g  + r \quad \text{and} \quad r = \frac{q}{b} + \frac{v}{k_2}, \]
where~$b$ is squarefree, $\gcd(b, k_2)=1$, $\deg q < \deg b$, and~$v$ is in the standard complement~$\cN_K$ of
the subspace of polynomial reduction w.r.t.~$K$. For brevity, we say that~$r$ is {\em a residual form}
w.r.t~$K$ if~$f$ is clear from context.
\end{defi}

Residual forms are closely related to the integrability of hyperexponential functions.
\begin{lemma} \label{LM:rfi}
With Convention~\ref{CON:kernel}, we further assume that~$r$ is a nonzero residual form w.r.t.~$K$.
Then the hyperexponential function~$r T$ is not integrable.
\end{lemma}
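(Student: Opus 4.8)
The plan is to argue by contradiction, first reducing integrability of $rT$ to a first-order rational equation and then showing that the only way it can hold forces $r=0$. Since $T=\exp\left(\int K\,dy\right)$ and $rT$ is similar to $T$, any hyperexponential antiderivative $G$ of $rT$ must be similar to its own derivative, hence of the form $G=sT$ with $s\in\bE(y)$. By~\eqref{EQ:integrable} applied with $f=K$, the equality $rT=D_y(sT)$ is therefore equivalent to
\[ r = D_y(s) + K s. \]
So it suffices to prove that this equation has no solution $s\in\bE(y)$ when $r$ is a nonzero residual form. I write $r=q/b+v/k_2$ as in Definition~\ref{DEF:residual}, with $b$ squarefree, $\gcd(b,k_2)=1$, $\deg q<\deg b$, and $v\in\cN_K$.

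The central step, and the one I expect to be the main obstacle, is to show that any solution $s$ must be a polynomial; this rests on a pole analysis in the same spirit as the proof of Lemma~\ref{LM:poly}(i). At any point $\alpha$ that is not a root of $k_2$, the kernel $K$ is regular and $r$ has there a pole of order at most one (such poles come only from $q/b$, and $b$ is squarefree and coprime to $k_2$); if $s$ had a pole of order $m\ge 1$ at $\alpha$, then $D_y(s)+Ks$ would have a pole of order $m+1\ge 2$, contradicting $r=D_y(s)+Ks$. At a root $\alpha$ of $k_2$ of multiplicity $e$, the term $q/b$ is regular and $v/k_2$ has a pole of order at most $e$, so $D_y(s)+Ks=r$ has pole order at most $e$ at $\alpha$. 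Comparing leading Laurent coefficients then rules out a pole of $s$: for $e\ge 2$ the term $Ks$ dominates and would force an impossible pole order $e+m>e$, while for $e=1$ the only way to keep the pole order at most $e$ is that the top coefficient cancels, which would force the residue of $K$ at $\alpha$ to equal the integer $m$, contradicting that $K$ is differential-reduced. The delicate point is exactly this separation of the simple-pole case $e=1$ from the higher-order case $e\ge 2$, together with the verification that the leading coefficients do not accidentally vanish; this is where the non-integer residues of $K$ are indispensable. Hence $s\in\bE[y]$.

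With $s$ a polynomial, the equation $r=D_y(s)+Ks$ rewrites as $r k_2 = k_2 D_y(s)+k_1 s=\phi_K(s)\in\cM_K\subseteq\bE[y]$. Multiplying $r=q/b+v/k_2$ by $k_2$ gives $r k_2 = q k_2/b + v$, so $q k_2/b=\phi_K(s)-v$ is a polynomial; since $\gcd(b,k_2)=1$ and $\deg q<\deg b$ with $b$ squarefree, this forces $q=0$. Then $\phi_K(s)=v$ lies in $\cM_K\cap\cN_K$, which is $\{0\}$ because $\bE[y]=\cM_K\oplus\cN_K$. Therefore $v=0$ and $r=0$, contradicting that $r$ is a nonzero residual form. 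This contradiction shows that $rT$ is not integrable, completing the argument.
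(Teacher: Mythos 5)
Your proof is correct, and it takes a genuinely different route from the paper's. The paper's proof rewrites $rT=\frac{a}{b}\exp\left(\int\frac{k_1-D_y(k_2)}{k_2}\,dy\right)$, notes that the pair $\left(a/b,\,(k_1-D_y(k_2))/k_2\right)$ is indecomposable in the sense of Definition~2 of~\cite{GeddesLeLi2004}, and invokes Theorem~4 of that paper to conclude that integrability forces $a/b\in\bE[y]$, i.e.\ $q=0$; the leftover piece $(v/k_2)T$ is then handled by applying Lemma~\ref{LM:poly}(i) as a black box, giving $v\in\cM_K\cap\cN_K=\{0\}$. You instead work directly with the single equation $r=D_y(s)+Ks$ coming from~\eqref{EQ:integrable} and prove, by a local analysis of pole orders and residues, that any rational solution $s$ is necessarily a polynomial; this is in effect a strengthened form of Lemma~\ref{LM:poly}(i) in which the right-hand side is allowed to be a residual form rather than a polynomial, and once it is established, both $q=0$ and $v\in\cM_K\cap\cN_K=\{0\}$ follow by elementary algebra and the direct sum $\bE[y]=\cM_K\oplus\cN_K$. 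Your case distinction is sound: away from the roots of $k_2$, the pole of $D_y(s)$ alone already exceeds the simple pole that $q/b$ can contribute; at a multiple root of $k_2$, the term $Ks$ dominates in pole order; and at a simple root, cancellation of the leading Laurent coefficients would force the residue of $K$ to equal a positive integer, contradicting differential-reducedness --- exactly the property that the gcd condition guarantees at simple poles, and the same mechanism the paper uses inside the proof of Lemma~\ref{LM:poly}(i). What your route buys is self-containedness: it avoids the notion of indecomposability and the external Theorem~4 of~\cite{GeddesLeLi2004}, and it makes explicit where the differential-reduced hypothesis enters. What the paper's route buys is brevity: by reusing the cited shell-reduction results it dispenses with the case analysis entirely, at the price of resting the key step $q=0$ on a theorem proved elsewhere.
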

\begin{proof}
Let~$\cM_K$ be the subspace for polynomial reduction, and~$\cN_K$ its standard complement  w.r.t.~$K$.
By the definition of residual forms, there exist~$b, q \in \bE[y]$ with~$b$ being squarefree and~$v \in \cN_K$ s.t.
\begin{equation} \label{EQ:conds}
\deg b > \deg q, \,\, \gcd(b, k_2) = 1, \,\, \text{and} \,\, r = \frac{q}{b} + \frac{v}{k_2}.
\end{equation}
Thus,~$r$ can be rewritten as~$a/(bk_2)$ for some~$a \in \bE[y]$.
Note that~$a$ is not necessarily coprime with~$bk_2$.
It follows that
\[   r T  = \frac{a}{b} \exp \left( \int \frac{k_1- D_y(k_2)}{k_2}\ dy \right). \]
Since~$(k_1- D_y(k_2))/k_2$ is differential-reduced and~$k_2, b$ are coprime,
$(a/b, (k_1 - D_y(k_2))/k_2)$ is indecomposable according to Definition~2 in~\cite{GeddesLeLi2004}.
By Theorem~4 in~\cite{GeddesLeLi2004},~$a/b$
is in~$\bE[y]$. So the denominator of~$r$ divides~$k_2$, which,
together with~\eqref{EQ:conds}, implies that~$q=0$.
Consequently,~$(v/k_2) T$ is integrable. By~\eqref{EQ:integrable},
$v = k_2 D_y \left( s \right) + k_1 s$ for some~$s \in \bE(y)$.
Since~$v \in \bE[y]$, $v \in \cM_K$ by Lemma~\ref{LM:poly}~(i).
Thus,~$v=0$ because~$ v \in \cN_K$. We have that~$r=0$, a contradiction to the assumption that~$r \neq 0$.
\end{proof}
The existence and uniqueness of residual forms are described below.
\begin{lemma} \label{LM:unique}
With~Convention~\ref{CON:kernel}, we have that the shell~$S$ has a residual form w.r.t.\ the kernel~$K$.
If a rational function has two residual forms w.r.t.~$K$, then they  are equal.
\end{lemma}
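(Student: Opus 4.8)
The statement splits into existence and uniqueness, and I would treat them separately.

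For existence, I would observe that a residual form of the shell $S$ has in effect already been produced by the discussion leading to~\eqref{EQ:shell1}. Starting from the shell reduction~\eqref{EQ:shell}, the partial-fraction step~\eqref{EQ:pfd} together with the polynomial reduction~\eqref{EQ:poly} yields~\eqref{EQ:shell1}. Dividing that identity by $T$, recalling $H = ST$ and $D_y(gT) = (D_y(g) + Kg)T$, I would read off
\[ S = D_y(S_1 + w) + (S_1 + w)K + \left(\frac{q}{b} + \frac{v}{k_2}\right). \]
Here $b$ is squarefree and coprime to $k_2$ by the shell reduction, $\deg q < \deg b$ by~\eqref{EQ:pfd}, and $v \in \cN_K$ by construction, so $\frac{q}{b} + \frac{v}{k_2}$ meets Definition~\ref{DEF:residual} with witness $g = S_1 + w$. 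This settles existence.

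For uniqueness, I would suppose a rational function $f$ admits two residual forms $r$ and $\tilde r$, with witnesses $g$ and $\tilde g$, so that $f = D_y(g) + Kg + r = D_y(\tilde g) + K\tilde g + \tilde r$. Subtracting and setting $h = g - \tilde g$ gives $r - \tilde r = -(D_y(h) + Kh)$. Multiplying by $T$ and using $D_y(hT) = (D_y(h) + Kh)T$, I would conclude that the hyperexponential function $(r - \tilde r)T = -D_y(hT)$ is integrable. The plan is then to show $r - \tilde r$ is itself a residual form and to invoke Lemma~\ref{LM:rfi}: writing $r = q/b + v/k_2$ and $\tilde r = \tilde q/\tilde b + \tilde v/k_2$, I would combine the proper-fraction parts over $B = \operatorname{lcm}(b, \tilde b)$, while the $\cN_K$-parts combine to $v - \tilde v \in \cN_K$ by linearity of the standard complement. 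Since $b$ and $\tilde b$ are squarefree and coprime to $k_2$, so is $B$, and properness is preserved, so $r - \tilde r$ satisfies Definition~\ref{DEF:residual} with the trivial witness $g = 0$. If $r - \tilde r$ were nonzero, Lemma~\ref{LM:rfi} would force $(r - \tilde r)T$ to be non-integrable, contradicting the integrability established above; hence $r = \tilde r$.

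The main obstacle is exactly this last verification, that the difference of two residual forms is again a residual form. The $\cN_K$-component is immediate from linearity, but the fractional parts require care, since adding two proper fractions with distinct squarefree denominators can produce a non-squarefree denominator; passing to the least common multiple repairs this while preserving both coprimality with $k_2$ and properness, which is precisely what is needed to bring Lemma~\ref{LM:rfi} to bear.
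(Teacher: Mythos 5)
Your proof is correct and follows essentially the same route as the paper: existence is read off from~\eqref{EQ:shell1}, and uniqueness comes from subtracting the two decompositions, observing that $(r-\tilde r)T$ is integrable, and applying Lemma~\ref{LM:rfi} to the difference. In fact you supply a detail the paper leaves implicit --- the verification via $\operatorname{lcm}(b,\tilde b)$ that the difference of two residual forms is again a residual form --- which is exactly the right justification for the paper's one-line claim ``Since $r-r^\prime$ is also a residual form w.r.t.~$K$.''
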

\begin{proof}
By~\eqref{EQ:shell1}, $S = D_y(S_1+w) + (S_1+w) K + q/b + v/k_2.$
So~$q/b+v/k_2$ is a required form.

Let~$r=q/b+v/k_2$ and ~$r^\prime = q^\prime/b^\prime + v^\prime/k_2$ be two residual forms of a rational function
w.r.t.~$K$, where~$b, b^\prime, q, q^\prime, v, v^\prime$ are  in~$\bE[y]$, $b$ and $b^\prime$ are squarefree,
$\gcd(b, k_2){=}\gcd(b^\prime,k_2){=}1$, $\deg q < \deg b$,
$\deg q^\prime < \deg b^\prime$ and~$v, v^\prime \in \cN_K$.
By the definition of residual forms,
$D_y(f) + f K + r = D_y(f^\prime) + f^\prime K + r^\prime$
for some~$f, f^\prime \in \bE(y)$. It follows that
\[  D_y\left(f-f^\prime \right) + \left(f -f^\prime \right) K + r - r^\prime = 0. \]
Hence,~$(r^\prime-r)T$
is integrable by~\eqref{EQ:integrable}. Since~$r-r^\prime$ is also a residual form w.r.t.~$K$,
$r=r^\prime$ by Lemma~\ref{LM:rfi}.
\end{proof}
Below is the main result of the present section.
\begin{theorem} \label{TH:hermite}
Let~$H$ be a hyperexponential function whose logarithmic derivative has kernel~$K$ and shell~$S$.
Then there is an algorithm for computing a rational function~$h$ in~$\bE(y)$
and a unique residual form~$r$ w.r.t.~$K$ s.t.
\begin{equation} \label{EQ:hermiteH}
 H = D_y \left( h \exp \left(\int K \ dy \right) \right) + r  \exp \left(\int K \ dy \right).
\end{equation}
Moreover, $H$ is integrable if and only if~$r=0$.
\end{theorem}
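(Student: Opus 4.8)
The plan is to assemble the decomposition directly from the two reduction steps already in place, and then to reduce the uniqueness and integrability claims to Lemmas~\ref{LM:unique} and~\ref{LM:rfi}. Since the logarithmic derivative of~$H$ has kernel~$K$ and shell~$S$, we may write~$H = S\,T$ with~$T=\exp\left(\int K\,dy\right)$, as in Convention~\ref{CON:kernel}. First I would run the shell reduction~\eqref{EQ:shell} to obtain~$H = D_y(S_1 T) + (a/(bk_2))\,T$ with~$b$ squarefree and~$\gcd(b,k_2)=1$, and then apply the polynomial reduction~\eqref{EQ:poly} to the remaining term. Collecting the two derivative parts and the two remainder parts yields exactly~\eqref{EQ:shell1},
\[ H = D_y\bigl((S_1+w)T\bigr) + \Bigl(\tfrac{q}{b}+\tfrac{v}{k_2}\Bigr)T, \]
so setting~$h=S_1+w$ and~$r=q/b+v/k_2$ produces the required form~\eqref{EQ:hermiteH}. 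By construction~$r$ meets the shape constraints of Definition~\ref{DEF:residual} ($b$ squarefree, $\gcd(b,k_2)=1$, $\deg q<\deg b$, and~$v\in\cN_K$) and satisfies~$S=D_y(h)+Kh+r$, hence~$r$ is a residual form of~$S$ w.r.t.~$K$; this is precisely the existence half of Lemma~\ref{LM:unique}. The algorithm is the composition of these two reductions.

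For uniqueness, I would observe that any decomposition of the form~\eqref{EQ:hermiteH} forces~$r$ to be a residual form of the \emph{same} rational function~$S$. Indeed, dividing~\eqref{EQ:hermiteH} by~$T$ and using~$D_y(T)/T=K$ gives~$S = D_y(h)+Kh+r$, so two admissible pairs~$(h,r)$ and~$(h',r')$ both exhibit~$r$ and~$r'$ as residual forms of~$S$. The uniqueness half of Lemma~\ref{LM:unique} then yields~$r=r'$ at once. (The rational part~$h$ need not be unique, which is why the statement claims uniqueness only for~$r$.)

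The integrability criterion splits into two directions. If~$r=0$, then~\eqref{EQ:hermiteH} reads~$H=D_y(hT)$, and since~$hT$ is hyperexponential,~$H$ is integrable. Conversely, suppose~$H=D_y(G)$ for some hyperexponential~$G$. The key point is to turn this into the integrability of~$rT$ \emph{as a hyperexponential function}. From~\eqref{EQ:hermiteH} we have~$rT = H - D_y(hT) = D_y(G-hT)$. Here I would invoke similarity: as noted after~\eqref{EQ:integrable}, an integrable~$H$ is similar to its primitive~$G$, while both~$hT$ and~$H=ST$ are similar to~$T$ (they differ from it by a factor in~$\bE(y)$); hence~$G$ and~$hT$ are similar, so~$G-hT$ is either zero or again hyperexponential. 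In either case~$rT=D_y(G-hT)$ is integrable. Since Lemma~\ref{LM:rfi} asserts that a nonzero residual form~$r$ makes~$rT$ non-integrable, we conclude~$r=0$.

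The proof is thus essentially a packaging of the earlier results, with the substantive work already carried out in Lemmas~\ref{LM:rfi} and~\ref{LM:unique}. The only genuinely delicate step is the similarity argument guaranteeing that~$G-hT$ is hyperexponential, since it is precisely this that licenses the application of Lemma~\ref{LM:rfi} in the forward direction of the equivalence.
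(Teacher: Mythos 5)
Your proposal is correct and takes essentially the same approach as the paper: shell reduction followed by polynomial reduction for existence (yielding~\eqref{EQ:shell1}), Lemma~\ref{LM:unique} for uniqueness of the residual form, and Lemma~\ref{LM:rfi} for the integrability criterion. The only place you go beyond the paper's own write-up is the explicit similarity argument showing that~$G-hT$ is hyperexponential (or zero), a point the paper leaves implicit when it asserts that~$rT$ is integrable.
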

\begin{proof}
Let~$T=\exp \left( \int K\, dy \right)$.
Applying the shell reduction to~$H$ w.r.t.~$K$, we can find a rational function~$S_1$,
and two polynomials~$a, b$ s.t.~\eqref{EQ:certred} holds. Then we apply the polynomial
 reduction to~$a/(bk_2) T$ to get the residual form~$r = q/b+v/k_2$ s.t.~\eqref{EQ:hermiteH} holds.

Suppose that there exists another decomposition
\begin{equation} \label{EQ:hermiteH1}
H = D_y \left( h^\prime T \right) + r^\prime  T
\end{equation}
for some~$h^\prime \in \bE(y)$ and~$r^\prime$ is a residual form w.r.t.~$K$.
Then both~$r$ and~$r^\prime$ are residual forms of~$S$ by~\eqref{EQ:hermiteH}, \eqref{EQ:hermiteH1} and the fact~$H = S T$.
So~$r = r^\prime$ by Lemma~\ref{LM:unique}.

 If~$r=0$, then~$H$ is obviously integrable. Conversely, assume that~$H$ is integrable. Then~$r  T$
 is also integrable by~\eqref{EQ:hermiteH}. So~$r=0$ by Lemma~\ref{LM:rfi}.
\end{proof}

The  reduction algorithm described in the proof of Theorem~\ref{TH:hermite} has three interesting features. First, it enables us to decide
hyperexponential integrability immediately. Second, it decomposes a hyperexponential function into a sum of
an integrable one and a non-integrable one in a canonical way. Third, it does not need to compute a polynomial solution
of any first-order linear differential equation. The method will be referred to as
{\em Hermite reduction for hyperexponential functions} in the sequel, because it extends all important conclusions
obtained by Hermite reduction for rational functions to hyperexponential ones.

\begin{example} \label{EX;integrable2}
Let~$H$ be the same hyperexponential function as in Example~\ref{Exam:example1}.
Then~$K=y/(y^2+1)$ and~$S=1/(y-1)^2$. Set~$T=\sqrt{y^2+1}$.
By the shell reduction in Example~\ref{EXAM:shell},
\[
H = D_y\left(\frac{-1}{y-1} T\right)+ \frac{y}{bk_2} T,
\]
where~$b=y-1$ and~$k_2 = y^2+1$.
The polynomial reduction yields
$(y/(bk_2)) T = D_y\left( -T /2 \right) + \left(1/(2b) + 1/(2k_2) \right) T.$
Combining the above equations, we decompose~$H$ as
\[H = D_y\left(\frac{-(y+1)}{2(y-1)}T \right)+ \left(\frac{1}{2b} + \frac{1}{2k_2}\right)T.\]
\end{example}
\begin{example} \label{EX:integrable2}
Consider~$H=y\exp(y)$ as given in Example~\ref{EXAM:integrable1}. Since its logarithmic derivative has kernel~$K=1$,
the subspace~$\cM_K$ of polynomial reduction is equal to~$\bE[y]$. Thus, $y \in \cM_K$ and~$H$ is integrable.
More generally,~$\cM_K=\bE[y]$ corresponds to the wellknown fact that~$p(y)\exp(y)$ is integrable for all~$p \in \bE[y] \setminus \{0\}$.

\end{example}

\section{Kernel reduction} \label{SECT:kernel}

Let~$K=k_1/k_2$ be a nonzero differential-reduced rational function in~$\bE(y)$ with~$\gcd(k_1, k_2)=1$.
We may want to reduce a hyperexponential function in the form
$$\frac{p}{k_2^m} \exp \left( \int K\ dy \right) \quad \mbox{for some~$p \in \bE[y]$ and~$m \in \bN$}.$$
One way would be to rewrite the above function as
\[  p \exp \left( \int \frac{k_1 - m D_y(k_2)}{k_2} \ dy \right), \]
and proceed by polynomial reduction w.r.t.\ the new kernel~$(k_1 - m D_y(k_2))/k_2$, which is also differential-reduced.
However, it will prove to be more convenient in Section~\ref{SECT:ct} to reduce the given function w.r.t.~the initial kernel~$K$.
To this end, we introduce another type of reduction, based on the ideas in~\cite{Davenport1986, XinZhang2009}.
\begin{lemma}\label{LM:rdshell}
With Convention~\ref{CON:kernel},
we let~$p\in \bE[y]$ and~$m \geq 1$.
Then there exist~$p_1, p_2\in \bE[y]$ s.t.
\begin{equation}\label{EQ:pbm}
\frac{p}{k_2^m} = D_y\left(\frac{p_1}{k_2^{m-1}}\right) + \frac{p_1 }{k_2^{m-1}} K + \frac{p_2}{k_2}.
\end{equation}
\end{lemma}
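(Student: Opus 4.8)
The plan is to induct on $m$, peeling off one power of $k_2$ from the denominator at each step. The base case $m=1$ is immediate: choosing $p_1=0$ and $p_2=p$ makes both sides of~\eqref{EQ:pbm} equal to $p/k_2$. For the inductive step, assume the claim holds for exponent $m-1$. Given $p/k_2^m$, I would look for a single polynomial $c\in\bE[y]$ so that subtracting the ``exact'' part $D_y\!\left(c/k_2^{m-1}\right)+\left(c/k_2^{m-1}\right)K$ lowers the denominator from $k_2^m$ to $k_2^{m-1}$. A direct computation gives
\[
D_y\!\left(\frac{c}{k_2^{m-1}}\right) + \frac{c}{k_2^{m-1}}K
= \frac{D_y(c)\,k_2 - (m-1)\,c\,D_y(k_2) + c\,k_1}{k_2^m},
\]
so that
\[
\frac{p}{k_2^m} - D_y\!\left(\frac{c}{k_2^{m-1}}\right) - \frac{c}{k_2^{m-1}}K
= \frac{p - c\bigl(k_1 - (m-1)\,D_y(k_2)\bigr) - D_y(c)\,k_2}{k_2^m}.
\]
The numerator on the right is divisible by $k_2$ exactly when $c\bigl(k_1-(m-1)\,D_y(k_2)\bigr)\equiv p \pmod{k_2}$.

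The hard part is ensuring that this congruence is solvable for $c$, and this is precisely where the hypothesis that $K=k_1/k_2$ is differential-reduced is used. By the differential-reduced condition, $\gcd\bigl(k_2,\ k_1 - i\,D_y(k_2)\bigr)=1$ for every $i\in\bZ$; taking $i=m-1$ (a nonnegative integer, since $m\geq 1$) shows that $k_1-(m-1)\,D_y(k_2)$ is invertible modulo $k_2$. Hence a solution $c$ exists, and the numerator above factors as $k_2\,\tilde p$ for some $\tilde p\in\bE[y]$, yielding
\[
\frac{p}{k_2^m} = D_y\!\left(\frac{c}{k_2^{m-1}}\right) + \frac{c}{k_2^{m-1}}K + \frac{\tilde p}{k_2^{m-1}}.
\]

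Finally I would apply the induction hypothesis to $\tilde p/k_2^{m-1}$, obtaining $q_1,q_2\in\bE[y]$ with $\tilde p/k_2^{m-1} = D_y\!\left(q_1/k_2^{m-2}\right)+\left(q_1/k_2^{m-2}\right)K+q_2/k_2$. Rewriting $q_1/k_2^{m-2}=q_1 k_2/k_2^{m-1}$ and merging the two derivative-plus-kernel contributions under the common denominator $k_2^{m-1}$, the choice $p_1=c+q_1 k_2$ and $p_2=q_2$ produces the desired identity~\eqref{EQ:pbm}, closing the induction. The only genuinely nontrivial ingredient is the invertibility step, so I expect the differential-reducedness of $K$ to carry the entire weight of the argument; everything else is bookkeeping on powers of $k_2$.
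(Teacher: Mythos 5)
Your proof is correct and takes essentially the same route as the paper's: induction on $m$, where the differential-reducedness of $K$ guarantees that $k_1-(m-1)D_y(k_2)$ is invertible modulo $k_2$, allowing one power of $k_2$ to be peeled off per step before invoking the induction hypothesis. Your congruence formulation and the paper's B\'ezout identity $p=u\bigl(k_1-(m-1)D_y(k_2)\bigr)+vk_2$ obtained from the extended Euclidean algorithm are the same device, and your final bookkeeping ($p_1=c+q_1k_2$, $p_2=q_2$) matches the paper's exactly.
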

\begin{proof}
We proceed by induction on~$m$. If~$m =1$, then taking~$p_1=0$ and~$p_2=p$ yields the claimed form.
Assume that~$m > 1$.  We first show that there exist~$\tilde{p}_1, \tilde{p}_2 \in \bE[y]$ s.t.
\[
\frac{p}{k_2^m} = D_y\left(\frac{\tilde{p}_1}{k_2^{m-1}}\right) + \frac{\tilde{p}_1 }{k_2^{m-1}} K + \frac{\tilde{p}_2}{k_2^{m-1}},
\]
which is equivalent to
\[ p  = \tilde{p}_1(k_1-(m-1)D_y(k_2))+ (D_y(\tilde{p}_1)+ \tilde{p}_2)k_2.\]
Since~$k_1/k_2$ is differential-reduced, there exist~$u, v\in \bE[y]$ s.t.~$p = u (k_1-(m-1)D_y(k_2)) + v k_2$ by
the extended Euclidean algorithm. So we can take
~$\tilde{p}_1 {=} u$ and~$\tilde{p}_2 {=} v {-} D_y(u)$. By the induction hypothesis,
there exist~$\bar{p}_1, \bar{p}_2\in \bE[y]$ s.t.
\[ \frac{\tilde{p}_2}{k_2^{m-1}}  = D_y\left(\frac{\bar{p}_1}{k_2^{m-2}}\right) +
\frac{\bar{p}_1 }{k_2^{m-2}} K + \frac{\bar{p}_2}{k_2}.\]
Setting~$p_1 = \bar{p}_1 k_2 + \tilde{p}_1$ and~$p_2 = \bar{p}_2$ completes the proof.
\end{proof}
With Convention~\ref{CON:kernel}, we have
\[ \frac{p}{k_2^m} T  = D_y\left(\frac{p_1}{k_2^{m-1}} T \right)  + \frac{p_2}{k_2} T \]
by Lemma~\ref{LM:rdshell}. This reduction will be referred to as the {\em kernel reduction for~$(p/k_2^m) T$ w.r.t.~$K$}.

\section{Telescoping via reductions}\label{SECT:ct}
Hermite reduction has been used to construct telescopers for bivariate rational
functions in~\cite{BCCL2010}. The goal of this section is to develop a
reduction-based telescoping method for bivariate hyperexponential functions.

\subsection{Creative telescoping for bivariate rational functions}
We briefly recall the reduction-based method for rational-function
telescoping in~\cite{BCCL2010}.

Let~$\bF$ be a field of characteristic zero and~$\bF(x, y)$ be the
field of rational functions in~$x$ and~$y$ over~$\bF$. Let~$D_x$
and~$D_y$ denote the usual derivations~$\partial/\partial x$
and~$\partial/\partial y$, respectively. Let~$\bF(x)\langle D_x\rangle$ be the ring of linear differential operators
over~$\bF(x)$. The ring~$\bF(x)\langle D_x\rangle$ is a left Euclidean domain and its left ideals are principal.
For~$r\in \bF(x, y)$, the telescoping problem is to construct a
nonzero linear
differential operator~$L(x, D_x)\in \bF(x)\langle D_x\rangle$ s.t.\
$L(x, D_x)(r) = D_y(s),$
where~$s\in \bF(x, y)$. The operator~$L$ is called a \emph{telescoper} for~$r$, and~$s$
is the corresponding \emph{certificate}. The set~$\mathcal{T}$ of all telescopers for a given rational function
is a left ideal of~$\bF(x)\langle D_x\rangle$. Any generator of~$\mathcal{T}$ is called a \emph{minimal telescoper}
for the given rational function.

For any~$i\in \bN$,  rational Hermite reduction (w.r.t.~$y$) decomposes~$D_x^i(r)$ into
$D_x^i(r) { = } D_y(s_i) {+} a_i/b,$
where~$s_i {\in} \bF(x, y)$ and~$a_i, b {\in} \bF(x)[y]$ with~$\deg_y(a_i) {<} \deg_y(b)$, and~$b$ is
squarefree over~$\bF(x)$.
Since~$\deg_y(a_i)$ is bounded by~$\deg_y(b)$,  the sequence~$\{a_i\}_{i\in \bN}$ is linearly dependent over~$\bF(x)$.
Assume that there exist~$e_0, \ldots, e_{\rho} \in  \bF(x)$, not all zero, s.t.~$\sum_{i=0}^{\rho} e_i a_i=0$.
Then~$L:=\sum_{i=0}^{\rho} e_i D_x^i$ is a telescoper for~$r$ and~$\sum_{i=0}^{\rho} e_i g_i$
is the corresponding certificate. In fact, $L$ is a minimal telescoper for~$r$ if
$\rho$ is the minimal integer s.t.~$e_0, \ldots, e_{\rho}$ are linearly dependent over~$\bF(x)$.
This reasoning yields the upper bound~$\deg_y (b)$ on the order of minimal telescopers.

\subsection{Creative telescoping for bivariate hyperexponential functions}
We now apply the Hermite reduction for univariate hyperexponential functions in Section~\ref{SECT:hermiteH} to compute
telescopers for bivariate hyperexponential functions.

A nonzero element~$H$ in some differential field extension of~$\bF(x,y)$ is said to be {\em hyperexponential}
over~$\bF(x,y)$ if its logarithmic derivatives~$D_x(H)/H$ and~$D_y(H)/H$ are in~$\bF(x,y)$.

Put~$f{=}D_x(H)/H$ and~$g{=}D_y(H)/H$. Then~$D_y(f){=}D_x(g)$ because~$D_x$ and~$D_y$ commute. Therefore,
it is legitimate to denote~$H$ by~$\exp(\int  f\, dx+g \, dy)$.
For two hyperexponential functions~$H_i {=} \exp(\int  f_i \, dx+g_i \, dy)$, $i=1,2$,
we have
  \[H_1 H_2 = \exp\left(\int  (f_1+f_2)\, dx+(g_1+g_2) \, dy\right).\]
In particular, the product of a rational function~$r\in \bF(x, y)$ and a hyperexponential function~$ H = \exp(\int  f\, dx+g \, dy)$ is
\[r H = \exp\left(\int ( f + D_x(r)/r )\, dx + (g + D_y(r)/r )\, dy\right ).\]

The following fact is immediate from~\cite[Lemma 8]{GeddesLeLi2004}.
\begin{fact}\label{FACT:certdenom}
Let~$f$ and~$g$ be rational functions in~$\bF(x, y)$ satisfying~$D_y(f)=D_x(g)$.
Then the denominator of~$f$ divides that of~$g$ in~$\bF(x)[y]$.
\end{fact}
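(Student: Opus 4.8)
The plan is to exploit the commutation relation $D_y(f)=D_x(g)$ together with the fact that both $f$ and $g$ are logarithmic derivatives of the same hyperexponential function $H$, so they share a common structure in their denominators. Write $f = A/P$ and $g = B/Q$ with $P,Q \in \bF(x)[y]$ denominators (reduced fractions, so $\gcd(A,P)=\gcd(B,Q)=1$). The goal is to show $P \mid Q$ in $\bF(x)[y]$. The natural tool is to compare the pole structure in $y$ of the two sides of $D_y(f)=D_x(g)$, since differentiation interacts predictably with pole orders.

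\medskip
\noindent\textbf{Plan.}
First I would localize at an irreducible factor $\pi$ of $P$ (working, if convenient, over the algebraic closure so that $\pi = y-\alpha$ with $\alpha$ depending on $x$, exactly as in the proof of Lemma~\ref{LM:poly}), and suppose toward a contradiction that $\pi \nmid Q$, i.e.\ $\pi$ is not a pole of $g$ in $y$. Let $m \ge 1$ be the order of $\pi$ as a pole of $f$. Applying $D_y$ to $f$ raises the pole order at $\pi$ from $m$ to $m+1$ (here one must check the leading coefficient does not vanish, which it cannot since $f$ is written in lowest terms), so the left-hand side $D_y(f)$ has a pole of order exactly $m+1 \ge 2$ at $\pi$. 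On the other hand, $D_x(g)$ can only create poles in $y$ at the $y$-zeros of $Q$ (the $x$-derivative differentiates the coefficients and the roots of $Q$, but never introduces a pole at a point where $g$ was already regular in $y$); since $\pi \nmid Q$ by assumption, $D_x(g)$ is regular at $\pi$. This contradicts $D_y(f)=D_x(g)$, forcing $\pi \mid Q$.

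\medskip
\noindent\textbf{From divisibility of factors to divisibility of $P$.}
Having shown every irreducible $\pi \mid P$ also divides $Q$, I would then promote this to a statement about multiplicities to conclude $P \mid Q$. The cleanest route is again via pole orders: if $\pi$ occurs with multiplicity $m$ in $P$ (so $f$ has a pole of order $m$ at $\pi$), then $D_y(f)$ has a pole of order $m+1$ there, whereas $D_x(g)$ has a pole of order at most $n+1$, where $n$ is the multiplicity of $\pi$ in $Q$ (differentiating in $x$ can raise the $y$-pole order by at most one when $\pi$'s roots move with $x$). Matching $m+1 \le n+1$ gives $m \le n$ for every $\pi$, which is precisely $P \mid Q$. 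The subtlety to handle carefully is the case where the roots of $Q$ do \emph{not} depend on $x$, in which case $D_x(g)$ does not raise the $y$-pole order at all; but this only strengthens the inequality, so the argument is unaffected. Since the cited \cite[Lemma 8]{GeddesLeLi2004} presumably packages exactly this pole-comparison analysis, the honest proof is simply to invoke it, but the sketch above is what underlies it.

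\medskip
\noindent\textbf{Main obstacle.}
The delicate point is the pole-order bookkeeping for $D_x(g)$: one must verify rigorously that the $x$-derivative never introduces a new $y$-pole and raises an existing one by at most one, which requires writing $Q = \prod_j (y-\beta_j(x))^{n_j}$ and tracking how $D_x$ acts through the chain rule on the moving roots $\beta_j(x)$. This is the step where a naive argument can slip, because a term like $D_x\!\bigl((y-\beta_j(x))^{-n_j}\bigr) = n_j \beta_j'(x)(y-\beta_j(x))^{-n_j-1}$ does legitimately raise the order, so one genuinely needs the relation $D_y(f)=D_x(g)$ to pin the orders down on both sides simultaneously rather than bounding them independently. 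Everything else is a routine comparison of Laurent expansions in $y$, entirely parallel to the residue computation already used for Lemma~\ref{LM:poly}.
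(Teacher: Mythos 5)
Your argument is correct, but it does not coincide with the paper's, for the simple reason that the paper offers no argument at all: Fact~\ref{FACT:certdenom} is disposed of in one line as being ``immediate from''~\cite[Lemma~8]{GeddesLeLi2004}. What you have written is a self-contained reconstruction of the pole-order analysis that this citation conceals, and it holds up. Both of your key estimates are sound: if $f$ has a pole of order $m\ge 1$ at a root $\alpha$ of an irreducible factor $\pi$ of $P$, then $D_y(f)$ has a pole there of order exactly $m+1$ --- though note that ``lowest terms'' is only half the justification (it gives the nonvanishing of the leading Laurent coefficient $c_m$), the other half being characteristic zero, which guarantees $m c_m\neq 0$; and $D_x(g)$ has poles in $y$ only at poles of $g$, each of order at most one higher, which you can see either from your partial-fraction bookkeeping (your term $n_j\beta_j'(x)\,(y-\beta_j(x))^{-n_j-1}$ is the extremal case) or, for the ``no new poles'' half, more quickly from $D_x(g)=\bigl(D_x(B)Q-BD_x(Q)\bigr)/Q^2$, whose denominator divides $Q^2$. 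Comparing orders at every $\alpha$ then gives $m\le n$ for each irreducible factor, hence $P\mid Q$ in $\bF(x)[y]$; the only unstated ingredient is that $D_x$ and $D_y$ extend uniquely to $\overline{\bF(x)}$, with $D_y$ vanishing there, which is standard in characteristic zero and is the same passage to the algebraic closure that the paper itself makes in proving Lemma~\ref{LM:poly}. As for what each route buys: the paper's citation is economical and defers to a published lemma, whereas your argument makes the statement verifiable without consulting~\cite{GeddesLeLi2004}, reuses only the Laurent-expansion technique already present in the proof of Lemma~\ref{LM:poly}, and settles multiplicities in a single pass (your preliminary radical-divisibility step is subsumed by the multiplicity comparison and could simply be deleted).
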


For a hyperexponential function~$H$ over~$\bF(x, y)$,
the telescoping problem is to construct a linear ordinary
differential operator $L(x, D_x)$ in~$\bF(x)\langle D_x\rangle$ s.t.
\[L(x, D_x)(H) = D_y(G)\]
for some hyperexponential function~$G$ over~$\bF(x, y)$.
As in the rational case, our idea
is to apply the Hermite reduction for univariate hyperexponential functions w.r.t.~$y$ to the derivatives~$D_x^i(H)$ iteratively,
and then find a linear dependency  among the residual forms over~$\bF(x)$.

\begin{lemma}\label{LM:iteratedhr}
Let~$ H = \exp(\int  f\, dx+g \, dy)$ be a hyperexponential function over~$\bF(x, y)$.
Let~$K$ be the kernel and~$S$ the shell of~$g$ w.r.t.~$y$.
Then, for every~$i\in \bN$, the $i$-th derivative~$D_x^i(H)$ can be decomposed into
\begin{equation}\label{EQ:ihr}
D_x^i(H) = D_y(u_i T) + r_i T,
\end{equation}
where~$u_i\in \bF(x, y)$, $T=\exp(\int  (f-D_x(S)/S)\, dx+K \, dy)$
and~$r_i \in \bF(x,y)$ is a residual form w.r.t.~$K$. Moreover, let~$k_2$ be the denominator of~$K$,
$b$~the squarefree part of the denominator of~$S$,
and~$\cN_K$ the standard complement
of the subspace for polynomial reduction w.r.t.~$K$.
Then
\begin{equation} \label{EQ:rf}
r_i=\frac{q_i}{b} + \frac{v_i}{k_2}
\end{equation}
for some~$q_i \in \bF(x)[y]$ with~$\deg_y q_i < \deg_y b$ and~$v_i \in \cN_K$.
\end{lemma}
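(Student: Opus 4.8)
The plan is to proceed by induction on $i$, using the univariate Hermite reduction of Theorem~\ref{TH:hermite} applied w.r.t.~$y$ to the derivatives $D_x^i(H)$, with all functions viewed over the ground field $\bF(x)(y)$ in place of $\bE(y)$. First I would establish the base case $i=0$. Writing $H = S\exp(\int K\,dy)$ in the $y$-variable, and using the product rule for hyperexponential functions to absorb the $x$-dependence, one checks that $H = S\,T$ where $T=\exp(\int (f - D_x(S)/S)\,dx + K\,dy)$; here the $dy$-component is $K$ as required, and the $dx$-component is forced by the compatibility $D_y(\cdot)=D_x(\cdot)$ together with Fact~\ref{FACT:certdenom}. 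Applying Theorem~\ref{TH:hermite} to the univariate (in $y$) hyperexponential function $H$ over $\bF(x)(y)$ yields $H = D_y(u_0 T) + r_0 T$ with $r_0$ the unique residual form w.r.t.~$K$ of shape \eqref{EQ:rf}.

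For the inductive step, assume \eqref{EQ:ihr} holds for $i$. I would differentiate with respect to $x$: since $D_x$ and $D_y$ commute,
\[
D_x^{i+1}(H) = D_y\bigl(D_x(u_i T)\bigr) + D_x(r_i T).
\]
The key observation is that $T$ is itself hyperexponential with $D_x(T)/T = f - D_x(S)/S \in \bF(x,y)$, so both $D_x(u_i T)$ and $D_x(r_i T)$ are again hyperexponential and similar to $T$; writing $D_x(r_i T) = \rho\,T$ for some $\rho\in\bF(x,y)$ shows that $D_x^{i+1}(H) - D_y(D_x(u_i T))$ is of the form $\rho\,T$. I would then apply the univariate Hermite reduction of Theorem~\ref{TH:hermite} to $\rho\,T$, obtaining $\rho\,T = D_y(\tilde u\,T) + r_{i+1}T$ with $r_{i+1}$ a residual form w.r.t.~$K$. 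Setting $u_{i+1} = D_x(u_i) + \tilde u$ and collecting the $D_y$-terms gives the claimed decomposition \eqref{EQ:ihr} for $i+1$, and the residual form $r_{i+1}$ automatically has the shape \eqref{EQ:rf} by Definition~\ref{DEF:residual}, with $q_{i+1}\in\bF(x)[y]$ of degree less than $\deg_y b$ and $v_{i+1}\in\cN_K$.

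Two points require care. First, one must confirm that the kernel $K$ of the $dy$-component is \emph{independent of $i$}: differentiating in $x$ cannot change the kernel, since $D_x(r_iT)$ remains similar to $T$ and hence its logarithmic derivative in $y$ has the same differential-reduced part $K$. This is exactly what makes the single space $\cN_K$ and the single squarefree $b$ work uniformly across all $i$, which is the feature that will later force the sequence $\{r_i\}$ to be linearly dependent over $\bF(x)$. Second, the squarefree denominator $b$ must be shown stable under $D_x$-differentiation; here I would invoke that Hermite-type reduction keeps denominators squarefree and that $b$ depends only on $K$ and on the (fixed) pole structure in $y$, as in the rational case recalled in Section~\ref{SECT:ct}.

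The main obstacle I anticipate is the bookkeeping needed to verify that $T$ can be chosen once and for all with a \emph{single} $dx$-component compatible with every $D_x^i(H)$, i.e.\ that factoring out $T$ is consistent across the induction rather than introducing a new exponential factor at each step. Establishing this cleanly relies on the commutation $D_y(f)=D_x(g)$ and on Fact~\ref{FACT:certdenom} to control the denominators in $\bF(x)[y]$; once $T$ is fixed, the remainder is a routine transcription of the univariate Theorem~\ref{TH:hermite} into the parametrized setting over $\bF(x)(y)$.
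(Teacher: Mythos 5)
Your induction skeleton (base case by Theorem~\ref{TH:hermite} over $\bF(x)(y)$, then differentiate \eqref{EQ:ihr} in $x$ and reduce $D_x(r_iT)=\rho\,T$ back into the form \eqref{EQ:ihr}) is the same as the paper's, but the inductive step has a genuine gap at its central point. You reduce $\rho\,T$ by invoking Theorem~\ref{TH:hermite}, and you justify that the output is a residual form w.r.t.\ the \emph{original} kernel $K$ by claiming that ``$D_x(r_iT)$ remains similar to $T$ and hence its logarithmic derivative in $y$ has the same differential-reduced part $K$.'' That claim is false: similarity does not preserve kernels. Multiplying $T$ by a rational function whose denominator shares factors with $k_2$ shifts the kernel by integer multiples of logarithmic derivatives of those factors. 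For instance, $T=\sqrt{y^2+1}$ has kernel $y/(y^2+1)$, while the similar function $T/(y^2+1)=1/\sqrt{y^2+1}$ has kernel $-y/(y^2+1)$. In your induction this is the generic situation, not a corner case: from $r_i=q_i/b+v_i/k_2$ and $D_x(T)/T=a/k_2$ one gets
\[
\rho \;=\; D_x(r_i)+r_i\,\frac{a}{k_2}
\;=\;\frac{D_x(q_i)}{b}-\frac{q_iD_x(b)}{b^{2}}+\frac{D_x(v_i)}{k_2}
+\frac{\left(a-D_x(k_2)\right)v_i}{k_2^{2}}+\frac{aq_i}{bk_2},
\]
whose denominator involves $k_2^{2}$ (and $b^{2}$). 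Consequently the kernel of $D_y(\rho T)/(\rho T)$ is a shifted kernel $K^*\neq K$, and Theorem~\ref{TH:hermite} applied to $\rho T$ yields a residual form w.r.t.\ $K^*$, multiplying $T^*=\exp\left(\int K^*\,dy\right)\neq T$ and lying in a different complement $\cN_{K^*}$. Converting that answer back to the factor $T$ reintroduces powers of $k_2$, so you do not land in the fixed space $\left\{q/b+v/k_2 \mid \deg_y q<\deg_y b,\ v\in\cN_K\right\}$ that the lemma asserts --- and it is exactly this uniformity in $i$ (same $K$, same $b$, same $\cN_K$, same $T$) that the linear-algebra argument of Theorem~\ref{TH:hct} needs afterwards.

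This is precisely the problem the paper's kernel reduction (Lemma~\ref{LM:rdshell}, Section~\ref{SECT:kernel}) was designed to solve: it reduces $(p/k_2^{m})\,T$ w.r.t.\ the \emph{initial} kernel $K$ instead of rewriting it with the shifted kernel $(k_1-mD_y(k_2))/k_2$. The paper's inductive step therefore does not cite Theorem~\ref{TH:hermite} wholesale; it computes $\rho$ explicitly as above and treats its three kinds of terms separately: shell reduction on $\left(-q_iD_x(b)/b^{2}\right)T$ (bringing the $b$-part back to denominator $b$), kernel reduction on $\left(\left(a-D_x(k_2)\right)v_i/k_2^{2}\right)T$ (bringing the $k_2$-part back to denominator $k_2$ while keeping the kernel equal to $K$), and then polynomial reduction w.r.t.\ $K$ on the remaining terms, whose denominators already divide $bk_2$. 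Your proof needs this mechanism inserted where you invoke Theorem~\ref{TH:hermite}. Note also that your ``second point of care'' (stability of the squarefree $b$) is resolved by the same explicit computation --- the $b$-denominator grows only to $b^{2}$ and is reduced back by shell reduction --- not by the appeal to ``fixed pole structure,'' which by itself is not an argument.
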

\begin{proof}
We proceed by induction on~$i$. If~$i=0$, then the assertion holds by~Theorem~\ref{TH:hermite}.

Assume that~$D_x^i(H)$ can be decomposed into~\eqref{EQ:ihr} and assume that~\eqref{EQ:rf} holds.
Moreover, let~$\tilde{f}=f-D_x(S)/S $.
Consider the $(i+1)$-th derivative $D_x^{i+1}(h)$.
There exists a polynomial~$a$ in~$\bF(x)[y]$ s.t.~$\tilde f = a/k_2$ by $D_y\left(\tilde f \right) = D_x(K)$  and Fact~\ref{FACT:certdenom}.
A direct calculation leads to
\begin{align*}
  D_x^{i+1}(H)= & D_y(D_x(u_i T)) + \left( \frac{aq_i}{bk_2}+\frac{D_x(q_i)}{b}+\frac{D_x(v_i)}{k_2}\right)T \\
  & + \left(\frac{-q_iD_x(b)}{b^2} +  \frac{(a-D_x(k_2))v_i}{k_2^2}\right)T.
\end{align*}
Applying the shell reduction to~$\left(-q_iD_x(b)/b^2\right)T$ and the kernel reduction to~$\left((a-D_x(k_2))v_i/k_2^2\right)T$
w.r.t.~$y$, we get
\begin{align*}
  \frac{-q_iD_x(b)}{b^2} & = D_y\left(\frac{w_1}{b}\right) + \frac{w_1}{b}K + \frac{w_2}{bk_2}, \\
  \frac{(a-D_x(k_2))v_i}{k_2^2} & = D_y\left(\frac{p_1}{k_2}\right) + \frac{p_1}{k_2}K + \frac{p_2}{k_2},
\end{align*}
where~$w_1, w_2, p_1$ and~$p_2$ are in~$\bF(x)[y]$.
We then apply polynomial reduction  to~$\tilde S T$  w.r.t.~$K$, where
\[\tilde{S}= \frac{w_2}{bk_2}+\frac{p_2}{k_2}+ \frac{aq_i}{bk_2}+\frac{D_x(q_i)}{b}+\frac{D_x(v_i)}{k_2},\]
 which leads to
\[\tilde{S} = D_y(w) + wK + \left(\frac{q_{i+1}}{b} +  \frac{v_{i+1}}{k_2}\right),\]
where~$w\in \bF(x, y)$ and~$q_{i+1}/b + v_{i+1}/k_2$ is the residual form of~$\tilde{S}$
w.r.t.~$K$. It follows from a direct calculation that
\[D_x^{i+1}(H) = D_y(u_{i+1}T) + \left(\frac{q_{i+1}}{b} +  \frac{v_{i+1}}{k_2}\right) T,\]
where~$u_{i+1} = D_x(u_i) + u_i\tilde{f} + w_1/b + p_1/k_2 + w$.
\end{proof}

The main results in the present section are given below.
\begin{theorem}\label{TH:hct}
With the notation introduced in Lemma~\ref{LM:iteratedhr},
we let~$L = \sum_{i=0}^\rho e_i D_y^i$ with~$e_0, \ldots, e_\rho \in \bF(x)$, not all zero.
\begin{itemize}
\item[(i)] $L$ is a telescoper for~$H$ if and only if~$\sum_{i=1}^\rho e_i r_i = 0.$
\item[(ii)] The order of a minimal telescoper for~$H$ is no more than~$\deg_y(b) + \max(\deg_y(k_1), \deg_y(k_2)-1)$.
\end{itemize}
\end{theorem}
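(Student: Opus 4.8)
The plan is to exploit the decomposition~\eqref{EQ:ihr} from Lemma~\ref{LM:iteratedhr} together with the uniqueness of residual forms. For part~(i), first I would apply the operator $L=\sum_{i=0}^\rho e_i D_y^i$ to~$H$ by linearity, using~\eqref{EQ:ihr} to write
\[
L(H) = \sum_{i=0}^\rho e_i D_x^i(H) = D_y\!\left(\Big(\sum_{i=0}^\rho e_i u_i\Big) T\right) + \Big(\sum_{i=0}^\rho e_i r_i\Big) T,
\]
where the~$e_i$ may be pulled inside~$D_y$ since they lie in~$\bF(x)$ and hence are $D_y$-constants. Setting~$\bar r = \sum_{i=0}^\rho e_i r_i$, the key observation is that~$\bar r$ is again a residual form w.r.t.~$K$: by~\eqref{EQ:rf} it equals~$(\sum e_i q_i)/b + (\sum e_i v_i)/k_2$, and the defining conditions of Definition~\ref{DEF:residual} are preserved because~$b$ is squarefree with~$\gcd(b,k_2)=1$, the numerator degree bound~$\deg_y(\sum e_i q_i)<\deg_y b$ is stable under $\bF(x)$-linear combination, and~$\cN_K$ is an $\bE$-linear (here $\bF(x)$-linear) space, so~$\sum e_i v_i \in \cN_K$. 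Thus~$L$ is a telescoper iff~$\bar r\,T$ is integrable, and by Lemma~\ref{LM:rfi} a nonzero residual form times~$T$ is never integrable; hence~$L(H)=D_y(G)$ for some hyperexponential~$G$ exactly when~$\bar r = \sum_{i=0}^\rho e_i r_i = 0$. This is the crux, and the one subtlety to check is that the field of constants here is~$\bF(x)$ rather than~$\bF$; the residual-form machinery of Section~\ref{SECT:hermiteH} goes through verbatim with~$\bE$ replaced by~$\bF(x)$, so Lemmas~\ref{LM:rfi} and~\ref{LM:poly} apply.

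For part~(ii), I would bound the order of a minimal telescoper by bounding the dimension of the $\bF(x)$-space in which the residual forms~$r_i$ live, and then invoking linear algebra: once that dimension is some number~$N$, the~$N+1$ elements~$r_0,\dots,r_N$ are $\bF(x)$-linearly dependent, yielding by part~(i) a telescoper of order at most~$N$. Each~$r_i=q_i/b+v_i/k_2$ decomposes into two independent pieces. The part~$q_i/b$ ranges over~$\{\,q/b : \deg_y q<\deg_y b\,\}$, an $\bF(x)$-space of dimension~$\deg_y b$. The part~$v_i/k_2$ ranges over~$\cN_K/k_2$, and by Corollary~\ref{COR:term} every element of~$\cN_K$ reduces to an $\bF(x)$-combination of at most~$\max(d_1,d_2-1)$ monomials, where~$d_1=\deg_y k_1$ and~$d_2=\deg_y k_2$; hence that piece contributes dimension at most~$\max(\deg_y k_1,\deg_y k_2-1)$.

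Summing the two contributions gives
\[
N \le \deg_y(b) + \max(\deg_y(k_1),\deg_y(k_2)-1),
\]
which is exactly the claimed bound. The argument is then completed by the standard linear-dependence step: the map~$i\mapsto r_i$ lands in a space of dimension at most~$N$, so among~$r_0,r_1,\dots,r_N$ there is a nontrivial $\bF(x)$-relation~$\sum_{i=0}^N e_i r_i=0$, and by part~(i) the operator~$\sum_{i=0}^N e_i D_y^i$ is a telescoper for~$H$; since a minimal telescoper generates the telescoper ideal, its order is at most~$N$. I expect the main obstacle to be purely bookkeeping rather than conceptual: correctly verifying that~$\cN_K$ is preserved under the iteration and that the residual forms produced by Lemma~\ref{LM:iteratedhr} all share the \emph{same} denominators~$b$ and~$k_2$, so that their span is genuinely confined to the single finite-dimensional space described above. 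Once that uniformity is in place, both parts follow directly from the cited lemmas and a dimension count.
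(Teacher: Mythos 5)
Your proposal is correct and follows essentially the same route as the paper: part~(i) by taking the $\bF(x)$-linear combination of the decompositions~\eqref{EQ:ihr}, observing that $\sum_i e_i r_i$ is again a residual form, and invoking Lemma~\ref{LM:rfi}; part~(ii) by the same dimension count via Corollary~\ref{COR:term} and $\deg_y(q_i)<\deg_y(b)$. The only difference is presentational: you spell out the closure of residual forms under $\bF(x)$-linear combinations and the change of constant field $\bE=\bF(x)$, which the paper's proof uses implicitly.
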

\begin{proof}
We set~$\bE=\bF(x)$ and view that hyperexponential functions involved in the proof are over~$\bE(y)$.
Moreover, let~$u=\sum_{i=0}^\rho e_i u_i$ and~$r = \sum_{i=0}^\rho e_i r_i.$
By~\eqref{EQ:ihr}, we have
\begin{equation} \label{EQ:tel}
L(H) = D_y(uT) +  rT.
\end{equation}
If~$r=0$, then~$L$ is a telescoper by~\eqref{EQ:tel}.
Conversely,
assume that~$L$ is a telescoper of~$h$. Then~$r T$
is integrable w.r.t.~$y$ by~\eqref{EQ:tel}. Since~$r$ is a residual form, it is equal to zero
by Lemma~\ref{LM:rfi}. The first assertion is proved.

Set~$\lambda=\max (\deg_y(k_1), \deg_y(k_2)-1)$.
Let the residual form~$r_i = q_i/b + v_i/k_2$ be as defined in~\eqref{EQ:ihr} and~\eqref{EQ:rf}.
By Corollary~\ref{COR:term}, the~$v_i$'s have a common set~$\cP$ of supporting monomials with~$|\cP|\le \lambda$.
Moreover,~$\deg_y(q_i)<\deg_y(b)$ and~$\gcd(b,k_2)=1$.
Therefore, the residual forms~$r_0,$ \ldots, $r_{\rho}$
are linearly dependent over~$\bF(x)$ if~$\rho\geq \deg_y(b) + \lambda$.
The second assertion holds
\end{proof}
\begin{remark} \label{RE:minimal}
By Theorem~\ref{TH:hct}, the first linear dependency among the residual forms~$r_0, r_1, r_2, \ldots$
gives rise to a minimal telescoper of~$H$.
\end{remark}

Below is an outline of the reduction based telescoping algorithm for hyperexponential
functions, in which the notation is that introduced in Lemma~\ref{LM:iteratedhr} is used.

\medskip \noindent
{{\bf Algorithm.}~\textsf{HermiteTelescoping}}:
Given a bivariate hyperexponential function $H=\exp(\int f\, dx+g\, dy)$ over~$\bF(x,y)$,
compute a minimal telescoper $L$ and its certificate w.r.t.~$y$.
\begin{enumerate}
\item Find the kernel~$K$ and shell~$S$ of~$D_y(H)/H$ w.r.t.~$y$. Set~$b$ to be the squarefree part of the denominator of~$S$.
\item Decompose~$H$ into~$H = D_y(u_0T) + r_0 T$
using the Hermite reduction for hyperexponential functions given in Theorem~\ref{TH:hermite}.
If $r_0=0$, return $(1, u_0T)$.
\item Set~$\rho:= \deg_y(b) + \max (\deg_y(k_1), \deg_y(k_2)-1)  $.
\item For $i$ from 0 to $\rho$ do
\begin{enumerate}
\item[4.1.] Compute~$(u_i, r_i)$ incrementally s.t.
\[D_x^i(H) = D_y(u_iT) + r_i T \]
by the shell, kernel and polynomial reductions described in Lemma~\ref{LM:iteratedhr}.
\item[4.2.] Find~$\eta_j \in \bF(x)$ s.t.~$\sum_{j=0}^i \eta_j r_j = 0$
 using the algorithm in~\cite{Storjohann2005}.
If there is a nontrivial solution, return~$\left(\sum_{j=0}^i \eta_j D_x^j, \,
\sum_{j=0}^i \eta_ju_jT \right)$.
\end{enumerate}
\end{enumerate}}


\begin{example}
Let~$H=\sqrt{x-2y}\,\exp(x^2y)$. Then~$D_x(H)/H$ and~$D_y(H)/H$ are, respectively,
\[f = \frac{1+4x^2y-8xy^2}{2(x-2y)} \quad \text{and} \quad g=\frac{-1+x^3-2x^2y}{x-2y}. \]
Since~$g$ is differential-reduced w.r.t.~$y$, $g$ is the kernel and~$1$ is the shell of~$D_y(H)/H$ w.r.t.~$y$.
By Hermite reduction,
\begin{equation}\label{EQ:h0}
H = D_y\left(\frac{1}{x^2} H\right) +
\frac{1}{x^2 k_2} H.
\end{equation}
Applying~$D_x$ to the above equation yields
\[D_x(H) = D_y\left(\frac{-3x+8y+4x^3y-8x^2y^2}{2x^3(x-2y)} H\right) +
r H, \]
where~$r={(-5x+8y+4x^3y-8x^2y^2)}/{(2x^3 k_2^2)}$.
The shell, kernel and polynomial reduction  given in Lemma~\ref{LM:iteratedhr} yields
\begin{equation}\label{EQ:h1}
D_x(H) = D_y\left(\frac{2x^2y-3}{x^3} \cdot H \right) + \frac{3x^3-6}{2x^3k_2} H
\end{equation}
Combining~\eqref{EQ:h0} and~\eqref{EQ:h1}, we get~$L = (6-3x^3)+2xD_x$ is a minimal telescoper for~$H$
and~$G =(4y-3x)H$ is the corresponding certificate.
\end{example}
\begin{remark}
The algorithm \textsf{HermiteTelescoping} is directly based on the proof of Lemma~\ref{LM:iteratedhr}.
Yet, there is another idea for computing a minimal telescoper of~$H$. Namely, we first compute a nonzero
operator~$L_1 \in \bF(x)\langle D_x \rangle$ of minimal order s.t.~$L_1(H)=D_y(G_1) + (p/k_2) T$ for some hyperexponential function~$G_1$
and polynomial~$p$. Note that such operators always exist,
because~$\deg_y q_i$ in~\eqref{EQ:rf} is less than~$\deg_y b$. Then we apply the algorithm \textsf{HermiteTelescoping}
 to get a minimal telescoper~$L_2$ for~$(p/k_2)T$. In doing so, any rational function with denominator~$b$ will not appear when we compute~$L_2$.
It turns out that~$L_2L_1$ is a minimal telescoper of~$H$. An implementation on this idea is underway.
\end{remark}
\subsection{Comparison with the Apagodu-Zeilberger bound}
Assume that
\begin{equation} \label{EQ:mult1}
H = u \exp\left(\frac{r_1}{r_2}\right) \prod_{i=1}^m p_i(x, y)^{c_i},
\end{equation}
where~$u, r_1, r_2, p_1, \ldots, p_m$ are nonzero polynomials in~$\bF[x,y]$
and~$c_1, \ldots, c_m$ are {\em distinct indeterminates}.
Theorem~cAZ  in~\cite{Apagodu2006} asserts that the order of minimal telescopers for~$H$ is bounded by
\[ \alpha := \deg_y (r_2) +  \max\left(\deg_y (r_1), \deg_y (r_2) \right) + \sum_{i=1}^m \deg_y (p_i) - 1. \]
Note that~$H$ can be viewed as a hyperexponential function over~$\bF(c_1, \ldots, c_m)(x,y)$.
We now show that~$\alpha$ given above is no less than the order bound on minimal telescopers for~$H$  obtained from Theorem~\ref{TH:hct}~(ii).
The kernel and shell of the logarithmic derivative~$D_y(H)/H$ are
$$K := D_y\left(\frac{r_1}{r_2}\right) + \sum_{i=1}^m c_i\frac{D_y(p_i)}{p_i} \quad \text{and} \quad
S := u,$$
respectively, because~$K$ has no integral residue at any  simple pole,~$S$ is a polynomial in~$\bF[x,y]$, and~$D_y(H)/H$ is
equal to~$K+D_y(S)/S$.
Let~$K=k_1/k_2$ with~$\gcd(k_1,k_2)=1$. A direct calculation leads to
\[ \deg_y (k_1) {\le}  \deg_y(r_1) + \deg_y (r_2) +  \sum_{i=1}^m \deg_y (p_i)  - 1, \]
and
$$\deg_y (k_2) \le 2 \deg_y (r_2) +   \sum_{i=1}^m \deg_y (p_i).$$
By Theorem~\ref{TH:hct}, the order of minimal telescopers for~$H$ is no more than $\max\left(\deg_y(k_1), \deg_y(k_2)-1\right)$,
which is no more than~$\alpha$ by the above two inequalities.

Indeed, the order bound in Theorem~\ref{TH:hct} (ii) may be smaller than that in Theorem cAZ.
\begin{example} \label{EX:bound}
Let~$H{=}q^c \exp(a/q)$, where~$a, q$ are irreducible polynomials in~$\bF[x,y]$ with~$\deg_y (a) < \deg_y (q),$ and~$c$ is
a transcendental constant over~$\bF$.
By Theorem cAZ,  a minimal telescoper for~$H$ has order no more than~$3 \deg_y q {-} 1$.
On the other hand,
the kernel and shell of~$D_y(H)/H$ are equal to $\left(D_y(a)q{-}aD_y(q){+}cqD_y(q)\right)/q^2$ and~$1$, respectively.
A minimal telescoper has order no more than~$2 \deg_y q - 1$ by Theorem~\ref{TH:hct}~(ii).
\end{example}

In general, Christopher's Theorem states that a hyperexponential function over~$\bF(x,y)$ can always be written as:
\begin{equation} \label{EQ:mult}
\frac{u}{v} \exp\left( \frac{r_1}{r_2} \right) \prod_{i=1}^m p_i(x, y)^{c_i},
\end{equation}
where~$u,v, r_1, r_2 \in \bF[x,y]$, $c_i$  is algebraic over~$\bF$, and~$p_i$ is in~$\bF(c_i)[x,y]$, $i=1, \ldots, m$.
A more explicit description on~\eqref{EQ:mult} can be found in~\cite{ChenThesis}. So~$H$ given in~\eqref{EQ:mult1}
is a special instance for hyperexponential functions.
In addition, it is easier to compute the kernel and shell than to compute the decompositions~\eqref{EQ:mult1} and~\eqref{EQ:mult} when
a hyperexponential function is given by its logarithmic derivatives.

\section{Implementation and timings} \label{SECT:timings}
We have produced a preliminary implementation of the algorithm \textsf{HermiteTelescoping} in the
computer algebra system~{\sf Maple 16}.
Our Maple code is available from
\[\text{\url{http://www4.ncsu.edu/~schen21/HermiteCT.html}}\]
We now compare the performance of our algorithm to the Maple
function~{\sf DEtools[Zeilberger]} of the telescoping
algorithm in~\cite{Almkvist1990}. The examples for comparison are of the form
\[\frac{p}{q^m} \cdot \sqrt{\frac{a}{b}}\cdot \exp\left(\frac{u}{v}\right),\]
where~$m\in \bN$, $p, q, a, b, u, v\in \bZ[x, y]$ are irreducible and their coefficients
are randomly chosen. For simplicity, we choose
$\lambda = \deg_y(p) = \deg_y(q)$, $\mu= \deg_y(a)=\deg_y(b)$, and $\nu= \deg_y(u)=\deg_y(v)$.
The runtime  comparison (in seconds) for different examples is shown in Table~\ref{tab:1}, in which
\begin{itemize}
  \item {\sf ZT}: the Maple function~{\sf DEtools[Zeilberger]}.
  \item {\sf HT}: our implementation of \textsf{HermiteTelescoping}.
  \item order: the order of the computed minimal telescoper.
  \item OOM: Maple runs out of memory.
\end{itemize}

\begin{table}[h]
  \vspace{-\smallskipamount}
  \begin{center}
  \def\c#1{\hbox to4em{\hss\smash{\raisebox{0.25ex}{#1}}\hss}}
  \begin{tabular}{r|r|r|c}
    $(\lambda, \mu, \nu, m)$   & {\sf{ZT}}   & {\sf{HT}}    & \c{order}    \\ \hline
     (2, 0, 2, 1) & 2.23  &  2.43  & 5   \\ 
     (2, 0, 2, 2) & 2.21  &  2.01  & 5   \\ 
     (3, 0, 2, 1) & 8.72  &  6.64  & 6  \\ 
     (3, 0, 2, 2) & 9.38  &  6.56  & 6   \\ 
     (6, 0, 1, 1) & 45.35 & 24.49  & 7   \\ 
     (6, 0, 1, 2) & 43.02 & 22.91  & 7   \\ 
     (2, 2, 2, 1) & 1405.9 & 221.5  & 9   \\ 
     (2, 2, 2, 2) & 1398.1 & 200.34 & 9   \\ 
     (3, 0, 3, 1) & 147.92 & 47.23  & 8  \\ 
     (3, 0, 3, 2) & 151.20 & 44.56  & 8  \\
     (3, 3, 0, 1) & 207.82 & 61.10  & 8   \\ 
     (3, 3, 0, 2) & 211.70 & 58.63  & 8   \\ 
     (3, 2, 1, 1) & 304.61 & 62.67  & 8   \\ 
     (3, 2, 1, 2) &  331.0 & 63.61  & 8   \\ 
     (3, 1, 3, 1) & OOM & 534.87     & 10 \\ 
     (3, 1, 3, 2) & OOM   &  522.15  & 10  \\
     \hline
  \end{tabular}
  \end{center}
  \vspace{-\smallskipamount}
  \caption{{\small Timings (in sec.) were taken on a Mac OS X computer with 4Gb RAM and 3.06 GHz Core 2 Duo processor.}}\label{tab:1}
\end{table}
\begin{remark} \label{RE:bound}
The orders of the computed minimal telescopers in our experiments are equal to the predicted order bounds in Theorem~\ref{TH:hct} .
\end{remark}

\bibliographystyle{plain}


\end{document}